\newtheorem{theorem}{Theorem}[section]
\newtheorem*{namedtheorem}{\theoremname}
\newcommand{\theoremname}{testing}
\newtheorem{lemma}[theorem]{Lemma}
\newtheorem{claim}[theorem]{Claim}
\newtheorem{corollary}[theorem]{Corollary}
\newtheorem*{question*}{Question}
\newtheorem*{metaquestion*}{Meta Question}
\newtheorem*{maintheorem*}{Theorem}
\theoremstyle{definition}
\newtheorem{definition}[theorem]{Definition}
\theoremstyle{plain}
\newtheorem{Alg}{Algorithm}
\renewenvironment{proof}{\noindent{\textbf{Proof:}}} {$\blacksquare$\vskip \belowdisplayskip}
\newcommand{\ignore}[1]{}
\newcommand{\calC}{{\mathcal{C}}}
\newcommand{\calS}{\mathcal{S}}
\newcommand{\calT}{\mathcal{T}}
\begin{document}

\title{A Singly-Exponential Time Algorithm for Computing Nonnegative Rank}
\author{Ankur Moitra\thanks{Institute for Advanced Study, \texttt{moitra@ias.edu}. Research supported in part 
by NSF grant No.
DMS-0835373 and by an NSF Computing and Innovation Fellowship.} }

\maketitle

\begin{abstract}

Here, we give an algorithm for deciding if the nonnegative rank of a matrix $M$ of dimension $m \times n$ is at most $r$ which runs in time $(nm)^{O(r^2)}$. This is the first exact algorithm that runs in time singly-exponential in $r$. This algorithm (and earlier algorithms) are built on methods for finding a solution to a system of polynomial inequalities (if one exists). Notably, the best algorithms for this task run in time exponential in the number of variables but polynomial in all of the other parameters (the number of inequalities and the maximum degree). 

Hence these algorithms motivate natural {\em algebraic} questions whose solution have immediate {\em algorithmic} implications: How many variables do we need to represent the decision problem, does $M$ have nonnegative rank at most $r$? A naive formulation uses $nr + mr$ variables and yields an algorithm that is exponential in $n$ and $m$ even for constant $r$. (Arora, Ge, Kannan, Moitra, STOC 2012) \cite{AGKM} recently reduced the number of variables to $2r^2 2^r$, and here we {\em exponentially} reduce the number of variables to $2r^2$ and this yields our main algorithm. In fact, the algorithm that we obtain is nearly-optimal (under the Exponential Time Hypothesis) since an algorithm that runs in time $(nm)^{o(r)}$ would yield a subexponential algorithm for $3$-SAT \cite{AGKM}. 

Our main result is based on establishing a normal form for nonnegative matrix factorization -- which in turn allows us to exploit algebraic dependence among a large collection of linear transformations with variable entries. Additionally, we also demonstrate that nonnegative rank cannot be certified by even a very large submatrix of $M$, and this property also follows from the intuition gained from viewing nonnegative rank through the lens of systems of polynomial inequalities.

\end{abstract}

\setcounter{page}{0} \thispagestyle{empty}
\newpage

\section{Introduction}

\subsection{Background}

The nonnegative rank of a matrix is a fundamental parameter that arises throughout algorithms and complexity and admits many equivalent formulations. In particular, given a nonnegative\footnotemark[1] matrix $M$ of dimension $m \times n$, its nonnegative rank is the smallest $r$ for which:

\begin{itemize}

\item $M$ can be written as the product of nonnegative matrices $A$ and $W$ which have dimension $m \times r$ and $r \times n$ respectively

\item $M$ can be written as the sum of $r$ nonnegative rank one matrices

\item there are $r$ nonnegative vectors $v_1, v_2, ... v_r$ (of length $m$) such that the nonnegative hull of $\{v_1, v_2, ... v_r\}$ contains all columns in $M$

\end{itemize}


\noindent  Throughout this paper, we will denote the nonnegative rank by $rank^+(M)$ and we will refer to a factorization $M = AW$ where $A$ and $W$ are nonnegative and have dimension $m \times r$ and $r \times n$ respectively as a nonnegative matrix factorization of inner-dimension $r$. 

\footnotetext[1]{We will refer to a matrix that is entry-wise nonnegative as a "nonnegative matrix".}

Some of the most compelling applications of nonnegative rank are in machine learning, statistics, combinatorics and communication complexity. In machine learning, the benefit of requiring a matrix factorization $M = AW$ to be nonnegative is that this factorization can then be interpreted {\em probabilistically}. A representative application comes from the domain of topic modeling, where $M$ is chosen to be a so-called "term-by-document matrix": the entry in row $i$, column $j$ is the frequency of occurrence of the $i^{th}$ word in the $j^{th}$ document. And computing a nonnegative matrix factorization of inner-dimension $r$ is akin to finding a collection of $r$ topics (which are each distributions on words) so that each document can be expressed as a convex combination of these $r$ topics. Nonnegative matrix factorization  has found applications throughout machine learning, from topic modeling to information retrieval to image segmentation and collaborative filtering. Even this is far from an exhaustive list. We note that of particular interest in these applications, are instances of this problem in which the target nonnegative rank $r$ is {\em small}. 

In combinatorial optimization, one is often interested in expressing a polytope $P$ as the projection of a higher-dimensional polytope $Q$ which (hopefully) has much fewer facets. The minimum number of facets needed is called the {\em extension complexity} of $P$ and there is a rich body of literature on this subject. Yannakakis established a striking connection between extension complexity and nonnegative rank: Given the polytope $P$, one constructs the "slack matrix": the entry in row $i$, column $j$ is how slack the $i^{th}$ vertex is against the $j^{th}$ constraint. Yannakakis proved that the nonnegative rank of the slack matrix is exactly equal to the extension complexity of $P$ \cite{Yan}. Fiorini et al \cite{FMPTW} recently used this connection and results from quantum communication complexity to prove a remarkable lower bound, that the traveling salesman (TSP) polytope has no polynomial size extended formulation. 

In communication complexity, the famous Log Rank Conjecture of Lovasz and Saks \cite{LOG} asks if the log of the rank of the communication matrix and the deterministic communication complexity are polynomially related. In fact, an equivalent formulation of this problem (that follows from \cite{AUY}) is that the Log Rank Conjecture asks if the log of the rank and the log of the nonnegative rank of a Boolean matrix are polynomially related. Of crucial importance here is that the matrix in question be Boolean. For a general matrix, there is no non-trivial relationship since there are examples in which the rank is three and yet the nonnegative rank is $\Omega(\sqrt{n})$ \cite{FRT}. Also in complexity theory, Nisan used nonnegative rank to prove lower bounds for non-commutative models of computation~\cite{Nisan}.

We note that nonnegative matrix factorization has also been applied to problems in biology, economics and chemometrics to model all sorts of processes, ranging from stimulation in the visual cortex to the dynamics of marriage. In fact, a historical curiosity is that nonnegative rank was first introduced in chemometrics, under the name of {\em self-modeling curve resolution}. 

\subsection{Systems of Polynomial Inequalities}\label{sec:intro2}

The focus of this paper is:

\begin{question*}
What is the complexity of computing the nonnegative rank?
\end{question*}

A priori it is not even clear that there is an algorithm that runs in any finite amount of time. But indeed, Cohen and Rothblum \cite{CR93} observed that the decision question of whether or not $rank^+(M) \leq r$ can be equivalently formulated as a system of $O(mn)$ polynomial inequalities with $mr + nr$ total variables variables: we can treat each entry in $A$ and each entry in $W$ as a variable, and the constraint that this be a valid nonnegative matrix factorization is exactly that $A$ and $W$ be nonnegative and that $M = AW$. The latter is a set of $mn$ degree two constraints. It is easy to see that this system of polynomial inequalities has a solution if and only if $rank^+(M) \leq r$. 

Moreover, whether or not a system of polynomial inequalities has a solution is decidable. This is a quite non-trivial statement. The first algorithm is due to Tarski \cite{Tar}, and there have since been a long line of improvements to this decision procedure. The best known algorithm is due to Renegar \cite{Ren} and the running time of finding a solution to a system of $p$ polynomial inequalities with $k$ variables and maximum degree $D$ is roughly $$\Big ( D p \Big )^{O(k)}$$ 
\noindent So (appealing to decision procedures for a system of polynomial inequalities) there is an algorithm for computing the nonnegative rank of a matrix that runs in a finite amount of time. Note that if the target nonnegative rank $r$ is {\em small} (say, three), this algorithm still runs in time exponential in $m$ and $n$. And the question of whether or not there is a faster algorithm (in particular, one which runs in polynomial time for any constant $r$) was still open. Vavasis proved that nonnegative rank is NP-hard to compute \cite{Vav}, but this only rules out an exact algorithm that runs in time polynomial in $n$, $m$ and $r$ (if $P \neq NP$). 

The crucial observation that the reader should keep in mind throughout this paper is that the main bottleneck in finding a solution to a system of polynomial inequalities is the {\em number of variables}. Renegar's algorithm \cite{Ren} runs in time polynomial in the number of polynomials ($p$) and the maximum degree ($D$), but runs in time exponential in the number of variables ($k$). In a technical sense, the number of variables plays an analogous role to the VC-dimension in learning theory. (This connection can be made explicit by drawing an analogy between the Milnor-Thom and Warren Bounds and the Sauer-Shelah Lemma).

Cohen and Rothblum \cite{CR93} give a reduction from nonnegative rank to finding a solution to a system of polynomial inequalities that has $mr + nr$ variables and a natural goal is to try to use fewer variables in this reduction. Arora et al \cite{AGKM}\footnotemark[2] do exactly this and give a reduction to a system with only $f(r) = 2 r^2 2^r$ variables. This yields an exact algorithm for deciding if $rank^+(M) \leq r$ that runs in time $$\Big ( nm \Big )^{ 2 r^2 2^r}$$ \noindent which is doubly exponential in $r$, but runs in polynomial time algorithm for any fixed $r$. Furthermore Arora et al \cite{AGKM} demonstrate that an exact algorithm for deciding if $rank^+(M) \leq r$ that runs in time $(nm)^{o(r)}$ would yield a sub-exponential time algorithm for $3$-SAT. In summary, there is an exact algorithm for deciding if $rank^+(M) \leq r$ that runs in polynomial time for any $r = O(1)$, and any algorithm must depend (at least) exponentially on $r$. However, the algorithm in \cite{AGKM} runs in time doubly exponential in $r$, and perhaps we could still hope for an algorithm that runs in time singly-exponential in $r$. Here, we give such an algorithm and we do this by reducing the number of variables {\em exponentially} from $ 2 r^2 2^r$ to $2r^2$. 

\footnotetext[2]{We remark that the present author is the last author on the paper \cite{AGKM}. However, the proofs that we present here will be self-contained.}

And perhaps the main message in this paper is that systems of polynomial inequalities with even just a small number of variables can be remarkably expressive! We believe that this theme may find other applications: Perhaps there are other problems for which one would like to design an algorithm based on solving some appropriately chosen system of polynomial inequalities. Then in this case, reducing the number of variables can drastically improve the running time of an algorithm. Indeed, maybe this complexity measure deserves to be studied in its own right:

\begin{metaquestion*}
Given a decision problem, how many variables are needed to encode its answer as a system of polynomial inequalities? 
\end{metaquestion*}

\noindent In particular, we want that the decision problem is a \textbf{YES} instance if and only if the corresponding system of polynomial inequalities has a solution. We note that this question probably makes the most sense and is the most promising in the context of geometric problems. (Indeed, nonnegative rank can be thought of in a purely geometric language and this is the view that will be most useful in our paper). 

\subsection{Our Results}

We now state our main results: Let $M$ be a $m \times n$ nonnegative matrix and let $L$ denote the maximum bit complexity of any coefficient in $M$. We prove

\begin{maintheorem*}
There is a $poly(n, m, L) ( r 4^{r+1} m n)^{c r^2} $ time algorithm for deciding if the nonnegative rank of $M$ is at most $r$. Additionally, given $\delta > 0$ (and if $rank^+(M) \leq r$), the algorithm runs in time $poly(n, m, L, \log \frac{1}{\delta}) ( r 4^{r+1} m n)^{c r^2} $ returns factors $\tilde{A}$ and $\tilde{W}$ that are entry-wise close (within an additive $\delta$) to $A$ and $W$ (respectively) that are a nonnegative matrix factorization of $M$ of inner-dimension at most $r$. Furthermore the entries of $\tilde{A}$ and $\tilde{W}$ have rational coordinates with numerators and denominators bounded in bit length by $O(L ( r 4^{r+1} m n)^{c r^2} + \log \frac{1}{\delta})$. 
\end{maintheorem*}

\noindent This is the first algorithm that runs in singly-exponential time as a function of $r$, and in fact is an {\em exponential} improvement over the previously best known algorithm due to Arora et al \cite{AGKM}. Moreover, notice that the algorithm in \cite{AGKM} is faster than the one in \cite{CR93} only if $r = O(\log n)$ whereas our algorithm is in fact faster for any $r = o(n)$. Our algorithm is nearly optimal (under the Exponential Time Hypothesis), since an exact algorithm that runs in time $(nm)^{o(r)}$ would yield a sub-exponential time algorithm for $3$-SAT \cite{AGKM}.

Our approach is based on two steps. First, we establish a "normal form" for nonnegative matrix factorization. We show that any nonnegative matrix factorization $M = AW$ of inner-dimension $r$ can be placed in a normal form (crucially, without changing the inner-dimension) so that a small subset of entries of $A$ and $W$ uniquely determine all of the remaining entries. More precisely, there are functions $F$ and $G$ (whose behavior only depends on an $r \times r$ submatrix of $A$ and on an $r \times r$ submatrix of $W$ respectively) such that $F$ maps each column of $M$ to the corresponding column of $W$ and $G$ maps each row of $M$ to the corresponding row of $A$. 

These functions $F$ and $G$ can be quite complicated when $A$ or $W$ do not have full column or row rank respectively. In the case that both $A$ and $W$ have full column and row rank, these functions are just linear transformations (see \cite{AGKM}). The difficulty is that when, say, $A$ is rank deficient there are cases in which we need exponential (in $r$) many linear transformations $T_1, T_2, ... T_q$ so that the output of $F$ is always the output of one of these linear transformations applied to a column of $M$. This is precisely the reason that the previous algorithm \cite{AGKM} ran in time doubly exponential in $r$ -- the number of variables is dominated by the number of linear transformations that we need, and in some cases we really do need exponentially many linear transformations to define the function $F$. Our approach to circumvent this problem is to exploit {\em algebraic dependence} among these linear transformations. In particular, our normal form allows us to show that the entries in these linear transformations can be defined as (ratios of) polynomial functions of a much smaller number of shared variables. This is an immediate corollary of our normal form and a simple application of Cramer's Rule. Hence we can reduce the number of variables (in the system of polynomial inequalities) from exponential in $r$ to quadratic in $r$.

We also consider another basic question about the nonnegative rank of a matrix: 

\begin{question*}
Can the nonnegative rank of a matrix $M$ be certified by a small submatrix? 
\end{question*}

Indeed -- in the case of the rank -- a matrix $M$ has rank at least $r$ if and only if there is an $r \times r$ submatrix of $M$ that has rank $r$. This property plays a crucial role in many applications \cite{GV} and it is natural to wonder if the nonnegative rank admits any similar characterization. As another motivation, often we are only given a subset of the entries of the matrix $M$ (for example, in the Netflix problem) and we would like to use these entries to infer properties about $M$. 
Yet, the nonnegative rank behaves quite differently than the rank in this regard. 

\begin{maintheorem*}
For any $r \in \mathbb{N}$, there is a $3rn \times 3rn$ nonnegative matrix which has nonnegative rank at least $4r$ and yet for any $< n$ rows, the corresponding submatrix has nonnegative rank at most $3r$. 
\end{maintheorem*}

\noindent So even the submatrices consisting of a constant fraction of the rows in $M$ do not determine the nonnegative rank of $M$ even within a constant factor. This result, too, can be thought of in the language of systems of polynomial inequalities: The basic principle at play is that even though the nonnegative rank can be equivalently characterized by a system of polynomial inequalities with only $2r^2$ variables, there are systems of polynomial inequalities that are together infeasible and yet any strict subset of the constraints is feasible. This is in stark contrast to the case of linear inequalities, for which, if the system is infeasible (and is in dimension $d$) there is a subset of just $d$ linear inequalities that is infeasible (i.e. there is a size $d$ obstruction) \cite{Mat}.

\section{Computing the Nonnegative Rank}

\subsection{Stability (A Normal Form)}

Throughout this paper, let $M$ denote an entry-wise nonnegative matrix of dimension $m \times n$. We will also let $M_i$ denote the $i^{th}$ column of $M$ and $M^j$ denote the $j^{th}$ row. Given a subset $U \subset [n]$, we will let $M_U$ denote the submatrix consisting of columns of $M$ from the set $U$ (and similarly $M^V$ is a submatrix of rows of $M$). 

\begin{definition}
$rank^+(M)$ is the smallest $r$ such that $M$ can be written as $$M = AW$$ where $A$ and $W$ are nonnegative and have dimension $m \times r$ and $r \times n$ respectively. 
\end{definition}

Additionally, we will call $M = AW$ a nonnegative matrix factorization of inner-dimension $r$. 

\begin{definition}
$$aff(A) = \Big \{\sum_i \alpha_i A_i \Big | \forall_i \alpha_i \geq 0\Big \}$$
\end{definition}

($aff(A)$ is the affine hull of columns in $A$). 

Note: Given $A$, there is a nonnegative matrix $W$ such that $M = AW$ if and only if each column $M_i$ of $M$ is contained in $aff(A)$. 

\begin{definition}
Given $A$ and a vector $v \in \mathbb{R}^m$ (recall $A$ is dimension $m \times r$), we will call a subset $S$ of columns of $A$ \textbf{admissible} if $$v \in aff(A_S).$$
\end{definition}

We will use this notion to place a stronger requirement on any nonnegative matrix factorization of $M$ It will not be immediately clear, but as we will be able to add this requirement without loss of generality. 

Throughout this paper, we will make use of the \textbf{lexicographic ordering} on subsets of columns of $A$. The standard lexicographic ordering is often restricted to comparing to subsets of the same size, but here we will want an ordering on all subsets. We will simply impose that if $|S| < |T|$, $S$ is before $T$ in the lexicographic ordering. 

Let $M = AW$ be a nonnegative matrix factorization.

\begin{definition}
For each column $M_i$, let $S_i$ be the lexicographically first admissible subset (of columns of $A$) for $M_i$. Similarly, for each row $M^j$, let $T_j$ be the lexicographically first admissible subset (of rows of $W$) for $M^j$. We call $M = AW$ \textbf{stable} if:
\begin{enumerate}

\item for each $i$, $W_i$ is supported in $S_i$

\item and for each $j$, $A^j$ is supported in $T_j$. 

\end{enumerate}
\end{definition}

Next we show that a nonnegative matrix factorization of inner-dimension $r$ can always be made stable (while preserving nonnegativity and the inner-dimension):

\begin{lemma}~\label{lemma:stable}
If $M = AW$ is a nonnegative matrix factorization of inner-dimension $r$, then there is a $\tilde{A}$ and $\tilde{W}$ such that:
\begin{enumerate}

\item $M = \tilde{A} \tilde{W}$, $\tilde{A}$ and $\tilde{W}$ are nonnegative and have inner-dimension $r$ and

\item $M = \tilde{A} \tilde{W}$ is stable.

\end{enumerate}
\end{lemma}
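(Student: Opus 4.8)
The plan is to make $M = AW$ stable by modifying $A$ and $W$ one column/row at a time, in two symmetric phases. First I would focus on the columns of $W$: for each $i$, let $S_i$ be the lexicographically first admissible subset of columns of $A$ for $M_i$, so $M_i \in aff(A_{S_i})$. By definition of admissibility there is a vector $u$ supported on $S_i$ with all coordinates nonnegative such that $A u = M_i$. I would simply \emph{replace} the $i$-th column $W_i$ of $W$ by such a $u$. Doing this for every column produces a new matrix $\tilde{W}$ (with $A$ unchanged) which is still nonnegative, still has $A\tilde{W} = M$ (each column is correct by construction), still has inner-dimension $r$, and now satisfies condition (1): $\tilde{W}_i$ is supported in $S_i$ for each $i$.

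Next I would perform the symmetric operation on the rows of $A$, but working with the \emph{already-modified} factorization $M = A\tilde{W}$. For each row $M^j$, let $T_j$ be the lexicographically first admissible subset of rows of $\tilde{W}$ for $M^j$ (i.e.\ $M^j \in aff((\tilde{W})_{T_j})$ viewing rows of $\tilde W$ as vectors), pick a nonnegative combining vector supported on $T_j$, and use it to overwrite the $j$-th row $A^j$. Call the result $\tilde{A}$. Again $\tilde{A}$ is nonnegative, $\tilde{A}\tilde{W} = M$ since each row is reconstructed correctly, and the inner-dimension is still $r$. So now condition (2) holds: $\tilde{A}^j$ is supported in $T_j$.

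The main obstacle — and the point that needs care — is that the second phase must not destroy property (1), and more subtly, the sets $S_i$ and $T_j$ that appear in the \emph{definition} of stability for the final factorization $M = \tilde A \tilde W$ must be the \emph{same} sets I used during the construction. For the first worry: changing the rows of $A$ to $\tilde A$ does not change $\tilde W$ at all, so $\tilde W_i$ is still supported in $S_i$; what must be checked is that $S_i$ is still the lexicographically first admissible set \emph{for $\tilde A$} (not the old $A$). Here I would argue that $aff(\tilde A_S)$ and $aff(A_S)$ relate appropriately — in fact, because the column space containment $M_i \in aff(\tilde A_{S_i})$ is forced (it equals $\tilde A \tilde W_i$ with $\tilde W_i$ supported on $S_i$), $S_i$ remains admissible for $\tilde A$, and one checks no lexicographically earlier set becomes admissible. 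The cleanest route is to observe that the column span structure relevant to admissibility is governed by which columns of $A$ (resp.\ rows of $\tilde W$) are linearly involved, and the row-rewriting step preserves this; any potential subtlety about a smaller admissible set appearing is handled by noting that if a lexicographically earlier $S'$ were admissible for $\tilde A$, one can trace back to show it was already admissible for the original data, contradicting minimality of $S_i$.

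I would also remark that this whole argument is really two independent applications of a single sublemma: \emph{given a nonnegative factorization, one can rewrite all the $W$-columns so that $W_i$ is supported on $S_i$, without touching $A$ or the inner-dimension}; stability then follows by applying it once to $W$ and once (transposed) to $A$, with the only real content being the commutation check above. I expect the write-up to be short modulo that check, which is where I would spend the most words.
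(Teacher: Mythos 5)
There is a genuine gap, and it sits exactly where you suspected it would. Your two\-/phase plan (rewrite all columns of $W$, then rewrite all rows of $A$) correctly handles the first phase, and you correctly note that the second phase must use admissibility with respect to the updated $\tilde W$. But the "commutation check" you defer to is not just a subtlety to be dispatched --- it is false as you state it. Admissibility of a set $S$ of columns for $M_i$ means $M_i \in aff(A_S)$, which depends on the actual column vectors of $A$. When you overwrite the rows $A^j$ to get $\tilde A$, every column of $A$ changes as a vector in $\R^m$; the new rows are chosen as nonnegative combining vectors against $\tilde W$, not as any transformation of the old rows, so there is no linear map relating $A_S$ to $\tilde A_S$. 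Consequently a set $S'$ lexicographically earlier than $S_i$ can perfectly well become admissible for $\tilde A$ without having been admissible for $A$, and your "trace back to the original data" step has nothing to trace through. At that point condition (1) of stability fails for the final pair $(\tilde A, \tilde W)$ and your argument stops.

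The paper's proof accepts that a single pass in each direction is not enough and instead \emph{iterates}: it alternates $W$-updating phases and $A$-updating phases indefinitely, updating a column (or row) only when a strictly lexicographically earlier admissible support exists, and proves termination with a potential function --- the support of each column of $W$ and each row of $A$ only ever moves earlier in the lexicographic ordering, and every update strictly decreases at least one of these finitely many values, so the process halts at a factorization where no update is possible, i.e.\ a stable one. If you want to keep your write-up, the fix is to replace the single $A$-phase with this loop and supply the monotonicity/termination argument; the per-phase reasoning you already have (each update preserves $M = AW$, nonnegativity, and inner-dimension $r$, since only one factor is touched at a time) is exactly what the paper uses inside each iteration.
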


\begin{proof}
The natural approach to prove this lemma is, if $M = AW$ is not stable, update columns in $W$ or rows in $A$. The only subtle point is that if we update $A$ and $W$ \textbf{at the same time} to $\tilde{A}$ and $\tilde{W}$, we may not have $M = \tilde{A} \tilde{W}$. So the approach is to update only one of these two at a time, to preserve that $M = A \tilde{W}$ or $M = \tilde{A} W$ and \textbf{then} update the other. Suppose we update $W$ to $\tilde{W}$ first. Then for a row in $M^j$, the lexicographically first subset of admissible rows (for $M^j$) is defined with respect to $\tilde{W}$ and not $W$ - i.e. a subset $V$ of rows is admissible if $M^j \in aff(\tilde{W}^V)$. 

Throughout our updating process, we will make use of a potential function to ensure that this process terminates. To each row of $A$ and to each column of $W$, we will associate a subset of $[r]$ corresponding to the support of the vector. Whenever we update either a row of $A$ or a column of $W$, the support will only ever move \textbf{earlier} according to the lexicographic ordering. 

So, now we can define our updating procedure. We start with $M = AW$, and each update phase will alternately be an $A$-updating phase or a $W$-updating phase. In a $W$-updating phase, for each column $M_i$ let $S_i$ be the lexicographically first subset of columns of $A$ that is admissible for $M_i$. If $S_i$ is lexicographically (strictly) earlier than the support of $W_i$, we find a vector $\tilde{W}_i$ that is nonnegative, and supported in $S_i$ and satisfies $M_i = A \tilde{W}_i.$ If not, we set $\tilde{W}_i = W_i$. In either case, we have that $M_i = A \tilde{W}_i$ and hence $M = A \tilde{W}$. At the end of this phase, we overwrite $W$ with $\tilde{W}$. 

The $A$-updating phase is defined analogously, and throughout this procedure we maintain the invariant that $M = AW$ and $A$ and $W$ are nonnegative and have inner-dimension $r$. Note that the support of columns of $W$ and rows of $A$ are monotonically decreasing according to the lexicographical ordering, and if either $A$ or $W$ are updated (any row of $A$ or any column of $W$), one support must have strictly decreased according to the lexicographic ordering so this updating procedure terminates with $M = \tilde{A} \tilde{W}$ that are nonnegative, have inner-dimension $r$ and are also stable. 
\end{proof}

\subsection{Few Entries Determine $A$ and $W$}

Throughout this section, let $M = AW$ be a \textbf{stable} nonnegative matrix factorization. 

The goal in this section is to demonstrate that (given $M$), only a few entries in $A$ and $W$ are needed to determine the remaining entries. This is only a property of stable factorizations, and is not guaranteed to hold for general factorizations. 

Let $rank(A) = s$ and let $U \subset [m]$ be a set of $s$ linearly independent rows in $A$. Furthermore, let $S_1, S_2, ... S_p \subset [r]$ be the (full) list of sets of $s$ linearly independent columns of $A$ (in lexicographic order). Note that $p \leq {r \choose s} \leq 2^r$. 

\begin{definition}
The ensemble of $A$ (at $U$) is a list of linear transformations: $B_1, B_2... B_p$ where for each $i$, $B_{i}$ is an $r \times s$ matrix that is zero on all rows outside the set $S_{i}$ and restricted to rows in $S_i$ is $(A^U_{S_i})^{-1}$. 
\end{definition}

Note that each submatrix $(A^U_{S_i})^{-1}$ is indeed invertible: $rank(A) = s$ and $U$ is a set of $s$ linearly independent rows so a set $S_i$ of columns of $A$ is linearly independent if and only if these vectors restricted to $U$ are also linearly independent. 

The main goal in this section is to show:

\begin{lemma}\label{lemma:lex}
For each column $M_i$, among the set of vectors $$\calS = \Big \{ B_1 M_i^U, B_2 M_i^U, ... B_p M_i^U \Big \}$$ $W_i$ is the \textbf{unique} vector with lexicographically minimal support among all nonnegative vectors in the set $\calS$.
\end{lemma}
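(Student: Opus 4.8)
The plan is to show two things: first, that $W_i$ itself appears in the set $\calS$ (i.e., $W_i = B_j M_i^U$ for some $j$), and second, that no nonnegative vector in $\calS$ has support lexicographically earlier than $\mathrm{supp}(W_i)$, with equality only for $W_i$ itself. The second part is exactly where stability is used: by definition of stability, $W_i$ is supported in $S_i$, the lexicographically first admissible subset of columns of $A$ for $M_i$, and its support is in fact $S_i$ once we observe that $M_i \in aff(A_{\mathrm{supp}(W_i)})$ forces $\mathrm{supp}(W_i)$ to be admissible, hence $\mathrm{supp}(W_i) = S_i$ by minimality (here I may need to argue $A$ restricted to $S_i$ has no redundancy, which follows because a proper subset of $S_i$ on which $M_i$ still lies would contradict lexicographic minimality).

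For the first part, I would start from $M_i = A W_i$ restricted to the rows in $U$, giving $M_i^U = A^U W_i$. Since $W_i$ is supported in $S_i$, this reads $M_i^U = A^U_{S_i} (W_i)_{S_i}$, where I use $(W_i)_{S_i}$ for the restriction of $W_i$ to coordinates in $S_i$. The key observation is that $S_i$ is a set of $s = \mathrm{rank}(A)$ linearly independent columns of $A$: it is linearly independent because any dependency would let me drop a column and keep $M_i$ in the affine hull of fewer columns, contradicting lexicographic minimality of $S_i$; and it has size exactly $s$ because if it had fewer than $s$ columns we could not necessarily... — actually the cleaner route is: $S_i$ linearly independent implies $|S_i| \le s$, and then I argue $|S_i| = s$ is \emph{not} required — instead I note that if $|S_i| < s$, I can extend $S_i$ to a set of $s$ independent columns, and the corresponding $B_j$ still reproduces $W_i$ since the extra coordinates of $W_i$ are zero and $(A^U_{S_j})^{-1} M_i^U$ agrees with $(W_i)_{S_j}$ on all of $S_j$ by uniqueness of the solution to the invertible system $A^U_{S_j} x = M_i^U$. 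So in either case there is an index $j$ (with $S_j$ being some $s$-element independent set containing $\mathrm{supp}(W_i)$) such that $B_j M_i^U = W_i$. I would also need to check the degenerate possibility that $M_i^U$ does not actually lie in the column span of $A^U_{S_j}$ for the chosen $S_j$ — but it does, since $M_i^U = A^U_{S_i}(W_i)_{S_i}$ and $S_i \subseteq S_j$.

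Now for uniqueness and lexicographic minimality within $\calS$: suppose $B_k M_i^U$ is nonnegative for some $k$, and let $\bv = B_k M_i^U$, which is supported in $S_k$. Then $A^U \bv = A^U_{S_k}(A^U_{S_k})^{-1} M_i^U = M_i^U$, and since $U$ indexes $s$ independent rows of the rank-$s$ matrix $A$, the map $x \mapsto A^U x$ restricted to... — more carefully, $A\bv$ and $A W_i = M_i$ agree on the rows $U$, and $M_i \in \mathrm{colspan}(A^U)$ determines the $A$-side uniquely up to the kernel, giving $A \bv = M_i$. Hence $\bv$ is a nonnegative vector with $A\bv = M_i$, so $M_i \in aff(A_{\mathrm{supp}(\bv)})$ and $\mathrm{supp}(\bv)$ is admissible; by lexicographic minimality of $S_i = \mathrm{supp}(W_i)$, $\mathrm{supp}(\bv)$ is lexicographically at least $\mathrm{supp}(W_i)$. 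If $\mathrm{supp}(\bv) = \mathrm{supp}(W_i) = S_i$, then both $\bv$ and $W_i$ solve the invertible system $A^U_{S_i} x = M_i^U$ on coordinates $S_i$, forcing $\bv = W_i$. This gives both minimality and uniqueness.

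The main obstacle I anticipate is the bookkeeping around the case $|S_i| < s$: the ensemble $B_1,\dots,B_p$ is indexed only by size-$s$ independent column sets, so I must be careful to show that $W_i$, whose support could be a \emph{proper} subset of some $S_j$, is still exactly reproduced by the corresponding $B_j$ (the zero-padding is automatic, but I must confirm the solution of the $s \times s$ invertible system has zeros precisely outside $\mathrm{supp}(W_i)$, which again comes back to lexicographic minimality of $S_i$ ruling out any nonzero coordinate of $\bv = B_j M_i^U$ at a position in $S_j \setminus S_i$ — otherwise $S_j$ would contain a strictly smaller admissible set, or $\bv$ itself would witness an admissible set incomparable to $S_i$; this needs the lexicographic-ordering convention that smaller sets come first, so a proper subset beats a superset). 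Handling this cleanly is the crux; everything else is linear algebra plus the definition of stability.
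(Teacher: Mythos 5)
Your proposal is correct and follows essentially the same route as the paper: it establishes that $W_i \in \calS$ by extending its (necessarily linearly independent) support to a size-$s$ independent set, shows every $B_k M_i^U$ satisfies $A\,B_k M_i^U = M_i$ because $U$ spans the row space of $A$, and then derives lexicographic minimality from admissibility of supports plus stability, with uniqueness from full column rank on the common support. The "main obstacle" you flag at the end is already resolved by your own earlier observation that the unique solution of the invertible system $A^U_{S_j} x = M_i^U$ must coincide with $W_i$ restricted to $S_j$, so no further argument is needed there.
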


We will break this lemma up into two parts:

\begin{lemma}\label{lemma:occur}
$W_i$ is contained in the set $\calS$.
\end{lemma}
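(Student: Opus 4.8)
The plan is to pinpoint which vector in $\calS$ equals $W_i$ — namely $B_j M_i^U$ for any $j$ such that $S_j$ is a set of $s$ linearly independent columns of $A$ containing the support of $W_i$ — and then to verify the identity $B_j M_i^U = W_i$ by solving an invertible $s\times s$ linear system. The starting point is the trivial consequence of $M = AW$ obtained by restricting to the rows indexed by $U$: for every column, $M_i^U = A^U W_i$.

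Granting for the moment that such a $j$ exists, the verification is short. Since $W_i$ is supported on $S_j$ we have $M_i^U = A^U_{S_j}\,(W_i)_{S_j}$, where $(W_i)_{S_j}$ denotes the restriction of $W_i$ to the coordinates $S_j$; the matrix $A^U_{S_j}$ is exactly the $s \times s$ invertible matrix used to define $B_j$ (it is invertible precisely because $S_j$ is an independent set of columns of $A$, so these columns stay independent when restricted to the rows $U$), hence $(W_i)_{S_j} = (A^U_{S_j})^{-1} M_i^U$. Since $B_j$ is zero outside the rows $S_j$ and $W_i$ vanishes outside $S_j$, this gives $B_j M_i^U = W_i$, so $W_i \in \calS$.

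The real content is therefore producing the index $j$, and this is exactly where stability enters. Let $S_i$ be the lexicographically first admissible subset of columns of $A$ for $M_i$, so $M_i \in aff(A_{S_i})$; by stability, $W_i$ is supported inside $S_i$. I claim that the columns $\{A_k : k \in S_i\}$ are linearly independent. Indeed, if they were not, then since $M_i$ lies in their conical hull $aff(A_{S_i})$, the conical version of Carath\'eodory's theorem would place $M_i$ in the conical hull of some linearly independent subset $S' \subseteq S_i$; as $A_{S_i}$ is dependent while $A_{S'}$ is independent we must have $S' \subsetneq S_i$, hence $|S'| < |S_i|$, so $S'$ precedes $S_i$ in the lexicographic ordering (recall that smaller sets come first) and is admissible for $M_i$, contradicting the minimality of $S_i$. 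Finally, since $rank(A) = s$, the independent set $S_i$ extends to a set $S_j$ of exactly $s$ linearly independent columns of $A$, which is therefore one of $S_1, \dots, S_p$, and $\mathrm{supp}(W_i) \subseteq S_i \subseteq S_j$, as required.

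The main obstacle is the linear-independence claim above; everything else is bookkeeping around Cramer's rule. It is worth stressing that stability genuinely cannot be dropped here: without it $W_i$ might have support of size larger than $s$ lying on linearly dependent columns of $A$, and then no $B_j M_i^U$ — which is always supported on a size-$s$ set — could possibly equal $W_i$.
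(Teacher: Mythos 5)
Your proposal is correct and follows essentially the same route as the paper: use stability to show that a set containing the support of $W_i$ is linearly independent (the paper argues this directly for $\mathrm{supp}(W_i)$ via the same perturbation/Carath\'eodory idea you use for $S_i$), extend it to a maximal independent set $S_j$ by the matroid property, and then verify $B_j M_i^U = W_i$ by the identity computation. The only difference is cosmetic --- which set you certify as independent before extending.
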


\begin{proof}
Let $R_i$ be the support of $W_i$. Then $R_i$ must correspond to a linearly independent set of columns of $A$ -- otherwise we could find a nonnegative $\tilde{W}_i$ whose support is a strict subset of $R_i$ such that $A \tilde{W}_i = M_i$, but this would violate the condition of stability.

Because the sets of linearly independent columns of $A$ are a matroid, there is a set $S_{i'}$ of $s$ linearly independent columns of $A$ for which $R_i \subset S_{i'}$. Hence $$B_{i'} M_i^U = B_{i'}(A W_i)^U = B_{i'}A^U W_i = v.$$ However, $B_{i'}$ is zero on rows outside the set $S_{i'}$ and restricting $B_{i'}A^U$ to rows and columns in $S_{i'}$ is the $s \times s$ identity matrix. Since the support of $W_i$ is contained in $S_{i'}$, we have $W_i = v$.
\end{proof}

We note a corollary of this lemma that will be useful later:

\begin{corollary}\label{cor:indep}
The support of $W_i$ corresponds to a linearly independent set of columns in $A$.
\end{corollary}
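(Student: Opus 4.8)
The plan is to observe that this corollary is precisely the assertion extracted in the opening line of the proof of Lemma~\ref{lemma:occur}, so it suffices to record that argument in isolation. Write $R_i = \mathrm{supp}(W_i) \subseteq [r]$ and suppose, for contradiction, that the columns $\{A_k : k \in R_i\}$ are linearly dependent. Then there is a nonzero vector $c \in \mathbb{R}^r$ with $\mathrm{supp}(c) \subseteq R_i$ and $Ac = 0$, and since $M_i = A W_i$ we get $M_i = A(W_i + t c)$ for every scalar $t$.

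Next I would run the standard perturbation argument to shrink the support. Every coordinate of $W_i$ indexed by $R_i$ is strictly positive, and $c$ is nonzero somewhere on $R_i$; choosing the sign of $t$ so as to decrease one of these strictly positive coordinates and letting $t^\ast$ be the first value at which some coordinate of $W_i + t c$ reaches $0$, the vector $\tilde{W}_i := W_i + t^\ast c$ is nonnegative, satisfies $A \tilde{W}_i = M_i$, and has $\mathrm{supp}(\tilde{W}_i) \subsetneq R_i$ --- coordinates outside $R_i$ stay zero because both $W_i$ and $c$ vanish there, and the coordinate that hits $0$ at $t^\ast$ leaves $R_i$.

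Finally I would derive the contradiction with stability. The set $R_i' := \mathrm{supp}(\tilde{W}_i)$ is admissible for $M_i$, since $\tilde{W}_i$ exhibits $M_i$ as a nonnegative combination of $\{A_k : k \in R_i'\}$, i.e.\ $M_i \in aff(A_{R_i'})$. As $|R_i'| < |R_i|$, the set $R_i'$ precedes $R_i$ in the (size-first) lexicographic order, so the lexicographically first admissible subset $S_i$ for $M_i$ satisfies $S_i \le R_i' < R_i$. But stability forces $W_i$ to be supported in $S_i$, i.e.\ $R_i \subseteq S_i$, which makes $R_i$ lexicographically no later than $S_i$, contradicting $S_i < R_i$. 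Hence no such dependence exists, and $R_i$ indexes a linearly independent set of columns of $A$.

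I do not expect a genuine obstacle here, since the corollary is essentially a repackaging of a step already carried out inside the proof of Lemma~\ref{lemma:occur}. The only points requiring a little care are the bookkeeping with the nonstandard lexicographic order (shorter sets come first, which is exactly what lets $R_i'$ beat $R_i$) and stopping the perturbation at the first coordinate that vanishes so that the support strictly decreases; the case $W_i = 0$ is trivial since the empty set is vacuously linearly independent.
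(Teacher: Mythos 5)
Your proof is correct and follows exactly the paper's route: the corollary is established in the first paragraph of the proof of Lemma~\ref{lemma:occur}, where linear dependence of the columns indexed by $\mathrm{supp}(W_i)$ would yield a nonnegative $\tilde{W}_i$ with strictly smaller support and $A\tilde{W}_i = M_i$, contradicting stability. You have merely filled in the details the paper leaves implicit (the perturbation along the kernel vector and the size-first lexicographic comparison), and those details are handled correctly.
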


Next, we prove the second part needed for the main result in this section:

\begin{lemma}\label{lemma:isvalid}
For each vector $B_{i'}M_i^U$, $A B_{i'} M_i^U = M_i$.
\end{lemma}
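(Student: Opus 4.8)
The plan is to unwind the definition of $B_{i'}$ and use the fact that $M_i \in aff(A)$, i.e.\ that $M_i = A W_i$ for the (nonnegative) $i$-th column $W_i$ of $W$. Recall $B_{i'}$ is the $r \times s$ matrix that vanishes on rows outside $S_{i'}$ and, restricted to the rows in $S_{i'}$, equals $(A^U_{S_{i'}})^{-1}$. So the vector $x := B_{i'} M_i^U$ is supported inside $S_{i'}$, and for the coordinates inside $S_{i'}$ we have $x_{S_{i'}} = (A^U_{S_{i'}})^{-1} M_i^U$. The goal is to show $A x = M_i$.

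First I would restrict everything to the rows in $U$. Since $A^U = A^U_{S_{i'}}$ on columns in $S_{i'}$ (and $x$ vanishes elsewhere), we get
\[
(A x)^U = A^U x = A^U_{S_{i'}} \, x_{S_{i'}} = A^U_{S_{i'}} (A^U_{S_{i'}})^{-1} M_i^U = M_i^U ,
\]
so $Ax$ and $M_i$ agree on the rows indexed by $U$. It remains to promote this to equality on all $m$ rows. The point is that every row of $A$ is a linear combination of the $s$ rows in $U$: indeed $rank(A) = s$ and $U$ indexes $s$ linearly independent rows of $A$, so the row space of $A$ is spanned by $\{A^u : u \in U\}$. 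Concretely, write $A^j = \sum_{u \in U} c_{j,u} A^u$ for each $j \in [m]$, with the same coefficients $c_{j,u}$ valid for every column; equivalently $A = C A^U$ for a fixed $m \times s$ matrix $C$ (and $C^U = I$). Then $A x = C A^U x = C M_i^U$, while also $M_i = A W_i = C A^U W_i = C M_i^U$ (using $M_i = A W_i$), so $A x = M_i$ on all rows. This last substitution is exactly where I use that $M_i$ lies in $aff(A)$ — without that, $M_i$ need not be in the column space structure dictated by $C$.

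The only subtlety — and the step I expect to need the most care — is justifying that the coefficient matrix $C$ with $A = C A^U$ exists, i.e.\ that the rows of $A$ outside $U$ really are spanned by the rows in $U$ with a single consistent set of coefficients. This is just the statement that $A^U$ has full row rank $s$ equal to $rank(A)$, so the row map factors through $A^U$; it is the row-analogue of the standard fact used to define the ensemble, and it is the same reasoning invoked right after the definition of the ensemble to argue each $(A^U_{S_i})^{-1}$ is well-defined. Everything else is a direct computation with block-structured matrices. (Note we do not need $W_i \in \calS$ or any stability property here; Lemma~\ref{lemma:isvalid} holds for every $i'$, and only its combination with Lemma~\ref{lemma:occur} in the proof of Lemma~\ref{lemma:lex} will use stability.)
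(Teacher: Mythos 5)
Your proof is correct and follows essentially the same route as the paper: first verify $(ABx)^U = M_i^U$ via the block structure of $B_{i'}$, then extend to all rows using that every row of $A$ is a fixed linear combination of the rows indexed by $U$ together with $M_i = AW_i$. The only cosmetic difference is that you package the row dependencies as a matrix $C$ with $A = CA^U$, whereas the paper writes the coefficients $\alpha_{j,j'}$ explicitly row by row.
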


\begin{proof}
Let $v = A B_{i'} M_i^U$. We prove this lemma in two parts: first we prove that $v^U = M_i^U$ and then we prove the full lemma from this. Since $B_{i'}$ is zero on rows outside the set $S_{i'}$, we have $$A B_{i'} = A_{S_{i'}} B_{i'}^{S_{i'}} = A_{S_{i'}}  (A_{S_{i'}}^U)^{-1}.$$ Hence $v^U = A_{S_{i'}}^U  (A_{S_{i'}}^U)^{-1} M_i^U = M_i^U$. 

Consider a $j$ outside the set $U$. By the choice of $U$, the row $A^j$ can be expressed as a linear combination of rows in $A$ in the set $U$: $$A^j = \sum_{j' \in U} \alpha_{j, j'} A^{j'}$$ Since $A W_i = M_i$, we have $ M_i^j = A^j W_i = \sum_{j' \in U} \alpha_{j, j'} A^{j'} W_i = \sum_{j' \in U} \alpha_{j, j'} M_i^{j'}$ and hence:
\begin{eqnarray*}
v^j &=& A^j B_{i'} M_i^U = \sum_{j' \in U} \alpha_{j, j'} A^{j'} B_{i'} M_i^U \\
&=& \sum_{j' \in U} \alpha_{j, j'} v^{j'} = \sum_{j' \in U} \alpha_{j, j'} M_i^{j'} = M_i^j
\end{eqnarray*}
\end{proof}

Now we can prove the main lemma in this section:

\vspace{0.5pc}

\begin{proof}
We have already shown (Lemma~\ref{lemma:occur}) that $W_i$ occurs in the set $\calS$. Consider any other \textbf{nonnegative} vector $B_{i'} M_i^U = v$. We need to show that the support of $v$ is lexicographically later than the support of $W_i$. 

First, we claim that if $v \neq W_i$ then the support of $W_i$ is not the same as the support of $v$. Suppose not - i.e. $v \neq W_i$ and yet the support of $v$ and of $W_i$ are identical (let this set be $R$). Indeed $R$ must correspond to a linearly independent set of columns of $A$ (Corollary~\ref{cor:indep}). Hence we cannot have $A(v - W_i) = \vec{0}$ (using Lemma~\ref{lemma:isvalid}) with $v - W_i \neq \vec{0}$ and support of $v - W_i$ contained in $R$. 

So the support of $W_i$ and $v$ are not identical and one of these must be lexicographically earlier. Suppose (for contradiction) that the support of $v$ is earlier. We know (Lemma~\ref{lemma:occur}) that the support of $W_i$ is an admissible set of columns of $A$ for $M_i$. This contradicts stability (because we could update $W_i$ to $v$), and so we can conclude that the support of $W_i$ is lexicographically earlier. 
\end{proof}

Let $rank(W) = t$ and let $V$ be a set of $t$ linearly independent columns of $W$. Then we can define an ensemble $C_1, C_2, ... C_q$ for $W$ at $V$ analogously as we did for $A$. Similarly, we have $q \leq {r \choose t}$ and for all $j$, among the set $$\calT = \Big \{ M_V^j C_1, M_V^j C_2, ... M_V^j C_q \Big \}$$ $A^j$ is the vector with lexicographically minimal support among all nonnegative vectors in $\calT$ (this follows from the above proof by interchanging the roles of $A$ and $W$). 

\subsection{A Semi-Algebraic Set, Take 1}

Our goal is to encode the question of whether or not $rank^+(M) \leq r$ as a non-emptiness problem for a semi-algebraic set with a small number of variables. Our first attempt will be to choose the entries in $B_1, B_2, ... B_p$ and $C_1, C_2, ... C_q$ as the \textbf{variables}. Our first goal is to construct a set of polynomial constraints (using the variables) so that setting $B_1, B_2, ... B_p$ and $C_1, C_2, ... C_q$ to the ensembles of a stable factorization $M = AW$ is a valid solution. We then show (conversely) that any valid setting of the variables in fact yields a nonnegative matrix factorization with inner-dimension $r$. 

Suppose we are given the sets $U$ and $V$, and the ensembles $B_1, B_2, ... B_p$ and $C_1, C_2, ... C_q$.

\begin{definition}
Let $first(\calS)$ applied to a collection of vectors output the vector with lexicographically minimal support among all nonnegative vectors in $\calS$. 
\end{definition}

This function can output \textbf{FAIL} if there is no nonnegative vector in $\calS$. 

\begin{claim}
Set:
\begin{equation} \label{eq:setw}
W_i \leftarrow first(\{B_1 M_i^U, B_2 M_i^U, ... B_p M_i^U\})
\end{equation}
\begin{equation} \label{eq:seta}
A^j \leftarrow first(\{M^j_V C_1, M^j_V C_2, ... M^j_V C_q\})
\end{equation}

\noindent There is an explicit Boolean function $\mathbb{P}$ that determines if for all $i$ and $j$: \textbf{1.} $ W_i \geq \vec{0}$  \textbf{2.} $A^j \geq \vec{0}$   and \textbf{3.} $ A^j W_i = M^j_i. $
Furthermore, $\mathbb{P}$ is a function of sign constraints on the polynomials:
\begin{enumerate}

\item $B_{i'}M_i^U$ (for all $i, i'$)

\item $M^j_V C_{j'}$ (for all $j, j'$) and

\item $M^j_V C_{j'} B_{i'} M_i^U - M_i^j$ (for all $i, i', j, j'$).

\end{enumerate}
\end{claim}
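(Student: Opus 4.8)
The plan is to show that the three desired properties can all be read off from the signs of the three families of polynomials listed, essentially because the function $first(\cdot)$ is itself a finite, sign-driven selection procedure once we are allowed to compare the supports of a fixed collection of vectors. First I would make precise what "computing $first(\calS)$" requires: for each candidate vector $B_{i'}M_i^U$ in $\calS$ we must (a) decide whether it is nonnegative, which is exactly a conjunction of the sign constraints on the polynomials of type (1); and (b) determine its support, i.e.\ for each coordinate $k$ decide whether the $k$-th entry $(B_{i'}M_i^U)_k$ is zero or nonzero, which is again a sign constraint on a polynomial of type (1). Having the sign pattern of every entry of every $B_{i'}M_i^U$, we know each support set exactly, so among the nonnegative ones we can pick out the lexicographically-minimal-support vector purely combinatorially — the lexicographic order is a fixed, data-independent total order on subsets of $[r]$, so "is the support of $B_{i'}M_i^U$ lexicographically before the support of $B_{i''}M_i^U$?" is decided by the already-computed sign data. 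The same discussion applies verbatim to $A^j$ and the polynomials of type (2). Thus there is a Boolean function of the signs of the type-(1) and type-(2) polynomials that outputs, for each $i$, the index $i^\ast(i)$ of the chosen representative for $W_i$ (or \textbf{FAIL}), and for each $j$ the index $j^\ast(j)$ for $A^j$ (or \textbf{FAIL}).

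Next I would assemble $\mathbb{P}$ from this. Property \textbf{1} ($W_i \ge \vec 0$) and property \textbf{2} ($A^j \ge \vec 0$) are built in: the selection in (a) only ever returns a nonnegative candidate, so these hold automatically unless $first$ returned \textbf{FAIL}, in which case $\mathbb{P}$ outputs \textbf{false}. For property \textbf{3}, once the indices $i^\ast(i)$ and $j^\ast(j)$ are fixed by the sign data, the quantity $A^j W_i - M_i^j$ is literally the polynomial $M^j_V C_{j^\ast(j)} \, B_{i^\ast(i)} M_i^U - M_i^j$, which is one of the polynomials of type (3) (for the specific choice $i' = i^\ast(i)$, $j' = j^\ast(j)$). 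So "$A^j W_i = M_i^j$" is exactly the constraint that this particular type-(3) polynomial is zero. Ranging over all $i$ and $j$, $\mathbb{P}$ is the conjunction of: (i) no $first$ call returns \textbf{FAIL} (a Boolean condition on type-(1) and type-(2) signs), and (ii) for each $(i,j)$ the selected type-(3) polynomial vanishes. Since which type-(3) polynomial is "selected" depends only on the type-(1)/(2) sign pattern, the whole thing is a legitimate Boolean function of sign constraints on the three listed families, as required; and one should spell out that it is "explicit" in the sense that it is computable from $M$, $U$, $V$ in the obvious way (enumerate sign patterns, determine supports, pick minima).

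I would also record the one genuinely-needed lemma hidden in the above: that whenever $B_{i'}M_i^U$ is nonnegative it does arise as $A'W'_i$ for the reconstructed factors — but this is not needed for the claim itself, which is purely the statement that $\mathbb P$ correctly tests properties \textbf{1}–\textbf{3} of the reconstructed $W_i, A^j$; the harder converse direction (that a valid setting of the variables actually yields a nonnegative factorization of inner-dimension $r$) is deferred to later in the section, consistent with the phrasing "We then show (conversely)\ldots" in the paper. So for this claim I only need the forward bookkeeping.

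The main obstacle, such as it is, is purely expository: being careful that the choice of representative inside $first$ is itself part of the Boolean function (so that saying "$A^j W_i - M_i^j$" is a single fixed polynomial is not circular), and handling the \textbf{FAIL} branch cleanly so that $\mathbb{P}$ is genuinely a total Boolean function of sign constraints rather than a partial one. I would therefore structure the write-up as: (1) observe signs of type-(1) polynomials determine, for each candidate $B_{i'}M_i^U$, whether it is nonnegative and what its support is; (2) conclude $i^\ast(i)$ — the $first$-selected index for $W_i$ — is a function of those signs (with a \textbf{FAIL} value); symmetrically for $j^\ast(j)$ from type-(2) signs; (3) note properties \textbf{1},\textbf{2} are automatic off the \textbf{FAIL} set; (4) note property \textbf{3} for a given $(i,j)$ is "the type-(3) polynomial indexed by $(i, i^\ast(i), j, j^\ast(j))$ equals $0$"; (5) take the conjunction over all $i,j$ to define $\mathbb{P}$.
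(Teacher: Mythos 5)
Your proposal is correct and follows essentially the same route as the paper: the paper likewise defines $\mathbb{P}$ as a conjunction of subfunctions $\mathbb{P}_{i,j}$, each of which reads the signs of the type-(1) and type-(2) polynomials to select the lexicographically-first nonnegative candidates and then checks vanishing of the corresponding type-(3) polynomial. The only detail worth adding is that the paper also has $\mathbb{P}_{i,j}$ output \textbf{FAIL} when two or more nonnegative candidates are tied for lexicographically earliest support (not just when none is nonnegative), so that the selection is unambiguous.
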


\vspace{0.5pc}

This claim is immediate, but we include a description of the Boolean function $\mathbb{P}$ for completeness

\vspace{0.5pc}

\begin{proof}
The Boolean function $\mathbb{P}$ will be an AND over subfunctions $\mathbb{P}_{i,j}$ defined for each $i$ and $j$: $\mathbb{P}_{i,j}$ will compute the index $i'$ and $j'$ so that $B_{i'}M_i^U$ and $M^j_V C_{j'}$ are lexicographically earliest among nonnegative vectors in the sets $\calS = \{B_1 M_i^U, B_2 M_i^U, ... B_p M_i^U\}$ and $\calT = \{M^j_V C_1, M^j_V C_2, ... M^j_V C_q\}$ respectively. This can be computed from only the signs of entries in the vectors in these sets. 

Then $\mathbb{P}_{i,j}$ will check that for this $i'$ and $j'$, that $M^j_V C_{j'} B_{i'} M_i^U = M_i^j$. If there is no nonnegative vector in either $\calS$ or $\calT$, or there are two or more nonnegative vectors tied for lexicographically earliest support (among only nonnegative vectors) then $\mathbb{P}_{i,j}$ will output \textbf{FAIL}. 
\end{proof}

\begin{lemma}\label{lemma:complete}
$\mathbb{P}$ will output \textbf{PASS} when $\{B_{i'}\}_{i'}$ and $\{C_{j'}\}_{j'}$ are chosen as the ensembles of a stable factorization $M = AW$.
\end{lemma}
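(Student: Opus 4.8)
The plan is to verify, by direct computation, that when $B_1,\dots,B_p$ and $C_1,\dots,C_q$ are the ensembles of a stable factorization $M=AW$ (at the chosen row set $U$ and column set $V$), the function $\mathbb{P}$ outputs \textbf{PASS}. Unpacking the definition of $\mathbb{P}$, this amounts to showing that for every $i$ and $j$, the subfunction $\mathbb{P}_{i,j}$ does not output \textbf{FAIL} and its equality check succeeds; that is, (a) there is a unique nonnegative vector of lexicographically minimal support in $\calS=\{B_1M_i^U,\dots,B_pM_i^U\}$, (b) likewise a unique one in $\calT=\{M_V^jC_1,\dots,M_V^jC_q\}$, and (c) if $i',j'$ are the indices achieving these minima, then $M_V^jC_{j'}B_{i'}M_i^U = M_i^j$.

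First I would handle (a). This is exactly the content of Lemma~\ref{lemma:lex}: among the vectors in $\calS$, the vector $W_i$ is the \emph{unique} nonnegative vector of lexicographically minimal support. So $\mathbb{P}_{i,j}$ picks out the index $i'$ with $B_{i'}M_i^U = W_i$. By the symmetric statement proved right after Lemma~\ref{lemma:lex} (interchanging the roles of $A$ and $W$, using $\rank(W)=t$ and the column set $V$), the analogous claim holds for $\calT$: $A^j$ is the unique nonnegative vector of lexicographically minimal support in $\calT$, so $\mathbb{P}_{i,j}$ picks out $j'$ with $M_V^jC_{j'} = A^j$. This disposes of (b) and shows neither set triggers the \textbf{FAIL} branches.

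It remains to check (c): that $M_V^jC_{j'}B_{i'}M_i^U = M_i^j$. With the identifications just made, $B_{i'}M_i^U = W_i$ and $M_V^jC_{j'} = A^j$, so the left-hand side is $A^jW_i$, which equals the $(j,i)$ entry of $AW = M$, namely $M_i^j$. Strictly speaking the expression $M_V^jC_{j'}B_{i'}M_i^U$ is a product of a row vector, a matrix, and a column vector, so I would spell out that $M_V^jC_{j'}$ is the row vector $A^j$ and $B_{i'}M_i^U$ is the column vector $W_i$, and their product is the scalar $\sum_k A^j_k (W_i)_k = (AW)_{ji} = M_{ji}$. Hence every $\mathbb{P}_{i,j}$ evaluates to true, and $\mathbb{P}$, being their AND, outputs \textbf{PASS}.

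I expect the only real subtlety — and hence the main point to be careful about — is invoking the \emph{uniqueness} half of Lemma~\ref{lemma:lex} (and its $W$-analogue) correctly: $\mathbb{P}_{i,j}$ outputs \textbf{FAIL} if there are two or more nonnegative vectors tied for lexicographically earliest support, so I must cite that the minimizer is unique, not merely that $W_i$ is \emph{a} minimizer. Everything else is bookkeeping: Lemma~\ref{lemma:isvalid} is not even needed here (it will matter for the converse direction), and the equality check in (c) is immediate once the indices $i',j'$ are pinned down. So the proof is short, essentially a restatement of Lemma~\ref{lemma:lex} together with the observation that the flagged entry equals the corresponding entry of $M$ because $M=AW$.
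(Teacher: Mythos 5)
Your proof is correct and follows the paper's own argument: both reduce the claim to Lemma~\ref{lemma:lex} (and its symmetric statement for $W$), with the uniqueness of the lexicographic minimizer being exactly what rules out the \textbf{FAIL} branches, and the final equality check being immediate from $M=AW$. The paper's proof is just a terser version of yours, so no changes are needed.
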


\begin{proof}
This follows immediately from Lemma~\ref{lemma:lex}. However, note that Lemma~\ref{lemma:lex} establishes uniqueness (i.e. the vector with lexicographically earliest support among all nonnegative vectors is unique) and hence each $\mathbb{P}_{i,j}$ will not prematurely output \textbf{FAIL} for these choices of $\{B_{i'}\}_{i'}$ and $\{C_{j'}\}_{j'}$. 
\end{proof}

Next, we prove the converse direction:

\begin{lemma}\label{lemma:sound}
If $\mathbb{P}$ outputs \textbf{PASS}, then $A$ and $W$ (as defined in \ref{eq:setw} and \ref{eq:seta}) are a nonnegative matrix factorization of inner-dimension $r$. 
\end{lemma}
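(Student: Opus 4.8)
The plan is to treat this as the easy soundness direction of the encoding: once $\mathbb{P}$ has certified its three sign conditions, the matrices $A$ and $W$ produced by (\ref{eq:setw}) and (\ref{eq:seta}) already have the right shape and, by construction, multiply to $M$. So the proof is essentially bookkeeping — all the real content of the reduction was pushed into the normal form and into Lemma~\ref{lemma:lex}, which this direction does not need at all.

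First I would check that $A$ and $W$ are well-defined. Since $\mathbb{P}$ outputs \textbf{PASS}, none of the subfunctions $\mathbb{P}_{i,j}$ outputs \textbf{FAIL}; in particular, for every $i$ the set $\{B_1 M_i^U,\dots,B_p M_i^U\}$ contains at least one nonnegative vector and exactly one of lexicographically minimal support, so $W_i=first(\cdot)$ from (\ref{eq:setw}) is a well-defined vector of $\mathbb{R}^r$ (each $B_{i'}$ is $r\times s$ and each $M_i^U\in\mathbb{R}^s$). Symmetrically, each $A^j$ from (\ref{eq:seta}) is a well-defined vector of $\mathbb{R}^r$ (each $C_{j'}$ is $t\times r$ and each $M^j_V\in\mathbb{R}^t$). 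I would also note, from the description of $\mathbb{P}$ in the preceding claim, that the index $i'$ it implicitly selects depends only on $i$ (it is the lex-earliest nonnegative vector in $\calS$) and similarly $j'$ depends only on $j$, so the scalar it actually verifies, $M^j_V C_{j'} B_{i'} M_i^U = M_i^j$, is exactly the identity $A^j W_i = M^j_i$.

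Next I would assemble the matrices: let $A$ be the $m\times r$ matrix with $j$-th row $A^j$ and let $W$ be the $r\times n$ matrix with $i$-th column $W_i$. Because $\mathbb{P}$ passed, conditions 1 and 2 give $W_i\geq\vec{0}$ and $A^j\geq\vec{0}$ for all $i,j$ (indeed this is automatic, since $first$ selects only among nonnegative vectors), so $A$ and $W$ are entrywise nonnegative; and condition 3 gives $A^j W_i = M^j_i$ for all $i,j$, which says precisely that the $(j,i)$ entry of $AW$ equals the $(j,i)$ entry of $M$. Hence $M=AW$ with $A\geq 0$, $W\geq 0$ of dimensions $m\times r$ and $r\times n$, i.e.\ a nonnegative matrix factorization of inner-dimension $r$.

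I do not expect a genuine obstacle here: unlike Lemma~\ref{lemma:complete}, soundness uses neither stability nor Lemma~\ref{lemma:lex}. The only thing that needs a sentence of care is confirming that the quantities $\mathbb{P}$ checks line up syntactically with the matrix product $AW$ and that \textbf{PASS} rules out every \textbf{FAIL} branch (no nonnegative vector, or a tie for lex-minimal support), so that $A$ and $W$ are unambiguously defined. Everything else is reading off definitions.
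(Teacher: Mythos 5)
Your proof is correct and follows essentially the same route as the paper's: \textbf{PASS} guarantees nonnegativity of each $W_i$ and $A^j$, condition 3 gives $A^j W_i = M^j_i$ entrywise, and the dimensions of the $B_{i'}$ and $C_{j'}$ force inner-dimension $r$. The extra care you take about well-definedness (ruling out the \textbf{FAIL} branches and checking that the selected indices $i'$, $j'$ depend only on $i$ and $j$ respectively) is a harmless elaboration of what the paper leaves implicit.
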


Note that this factorization is not necessarily stable. 

\vspace{0.5pc}

\begin{proof}
We have that $W_i$ and $A^j$ are nonnegative (otherwise $\mathbb{P}$ would have output \textbf{FAIL}) and $\mathbb{P}$ explicitly checks that $A^j W_i = M_i^j$ and hence $M = AW$. Note that $B_{i'}$ and $C_{j'}$ are $r \times s$ and $t \times r$ dimensional, so $M = AW$ does indeed have inner-dimension $r$. 
\end{proof}

Combining Lemma~\ref{lemma:complete} and Lemma~\ref{lemma:sound}, we have

\begin{theorem}
$\mathbb{P}$ outputs \textbf{PASS} for some choice of $s, t, U, V, p$ and $q$ and some setting of the variables $B_1, B_2, ... B_p$ and $C_1, C_2, .. C_q$  if and only if $rank^+(M) \leq r$.
\end{theorem}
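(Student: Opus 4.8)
The plan is to simply assemble the two directions already carried out in Lemma~\ref{lemma:complete} and Lemma~\ref{lemma:sound}, so that the theorem becomes a short corollary. For the ``only if'' direction, suppose $rank^+(M) \le r$. Then there is a nonnegative matrix factorization $M = AW$ of inner-dimension $r$, and by Lemma~\ref{lemma:stable} we may assume it is stable without changing the inner-dimension. I would then set $s = rank(A)$ and $t = rank(W)$, pick $U \subset [m]$ to be $s$ linearly independent rows of $A$ and $V \subset [n]$ to be $t$ linearly independent columns of $W$, let $S_1, \dots, S_p$ enumerate the $s$-element linearly independent column sets of $A$ (so $p \le \binom{r}{s}$) and likewise get $q \le \binom{r}{t}$ row sets of $W$, and take $B_1, \dots, B_p$ and $C_1, \dots, C_q$ to be exactly the ensembles of $A$ at $U$ and of $W$ at $V$. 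Lemma~\ref{lemma:complete} then says $\mathbb{P}$ outputs \textbf{PASS} for this choice of $s, t, U, V, p, q$ and these variables; the uniqueness part of Lemma~\ref{lemma:lex} is what guarantees that no $\mathbb{P}_{i,j}$ spuriously returns \textbf{FAIL} because of a tie for lexicographically minimal support.

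For the ``if'' direction, suppose that for \emph{some} choice of $s, t, U, V, p, q$ and \emph{some} setting of $B_1, \dots, B_p$, $C_1, \dots, C_q$ the function $\mathbb{P}$ outputs \textbf{PASS}. Define $W_i$ and $A^j$ by \eqref{eq:setw} and \eqref{eq:seta}. By the construction of $\mathbb{P}$, a \textbf{PASS} output certifies that every $W_i \ge \vec{0}$, every $A^j \ge \vec{0}$, and $A^j W_i = M^j_i$ for all $i, j$, i.e. $M = AW$ with $A$ and $W$ entry-wise nonnegative. Since each $B_{i'}$ is $r \times s$ and each $C_{j'}$ is $t \times r$, the vectors $W_i$ lie in $\mathbb{R}^r$ and the rows $A^j$ lie in $\mathbb{R}^r$, so this factorization has inner-dimension $r$ (it need not be stable, but that is irrelevant here). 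This is precisely Lemma~\ref{lemma:sound}, and it yields $rank^+(M) \le r$.

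Combining the two directions gives the equivalence. I do not expect any real obstacle, since both implications are already established; the only thing to watch is keeping the quantifiers straight -- the theorem asserts \emph{existence} of the auxiliary data $s, t, U, V, p, q$, which we \emph{produce} in the forward direction (reading it off a stable factorization) and are \emph{handed} in the backward direction. One may optionally remark that this argument does nothing to control the number of variables, since $p$ and $q$ can be as large as $\binom{r}{\floor{r/2}}$; that is exactly the deficiency addressed by the later refinement, which exploits algebraic dependence among the $B_{i'}$'s and $C_{j'}$'s to re-express their entries via a much smaller shared variable set.
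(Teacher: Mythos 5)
Your proposal is correct and matches the paper exactly: the theorem is stated there as an immediate combination of Lemma~\ref{lemma:complete} (together with Lemma~\ref{lemma:stable} to obtain a stable factorization) for the forward direction and Lemma~\ref{lemma:sound} for the converse. Your careful handling of the quantifiers over $s, t, U, V, p, q$ just makes explicit what the paper leaves implicit.
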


This leads to a natural approach for computing the nonnegative rank:

\begin{enumerate} \itemsep 0pt
\small
\tt
\item Guess $s = rank(A)$, $t = rank(W)$ (for some stable factorization $M = AW$)

\item Guess $U$ and $V$

\item Guess $p \leq {r \choose s}$ and $q \leq {r \choose t}$

\item Define a semi-algebraic set where the entries of $B_1, B_2, ... B_p$ and $C_1, C_2, ... C_q$ 
are variables (using the Boolean function $\mathbb{P}$)

\item Run an algorithm for deciding if the semi-algebraic set is non-empty (e.g. \cite{Ren})

\end{enumerate}

The running-time of the best algorithms for deciding if a semi-algebraic set is non-empty run in time $$ \Big ( \mbox{$\#$ polynomials} \times D \Big )^{O(k)}$$ where $D$ is the maximum degree and $k$ is the number of \textbf{variables}. This bound is close to (optimal) bounds on the number of sign configurations of  a set of polynomials with maximum degree $D$ and $k$ variables. These bounds are due to a number of authors, but are often referred to as Milnor-Warren bounds. Indeed the main bottleneck in algorithms for determining non-emptiness for a semi-algebraic set is just the time needed to enumerate all of these sign configurations (and make an oracle call to the Boolean function for each one). 

In the approach above, there are $r(p + q) + mnpq$ polynomials of degree at most \textbf{two} in the variables. $r(p+q)$ constraints are due to nonnegativity and $mnpq$ constraints are used to ensure that $M = AW$. However, the drawback of the above approach is that the number of \textbf{variables} is large. 

There are $rsp + rtq$ variables, and indeed $p$ and $q$ can be exponential in $r$. For example, if we take the columns of $A$ to be vertices of the cross-polytope (in $r/2$ dimensions), then we do in fact need exponentially many simplices (one corresponding to each linear transformation $B_{i'}$) to cover the convex hull of the cross-polytope just by a facet-counting argument. 

Hence, the running time of the above algorithm will be doubly exponential in $r$. However, we will be able to reduce the number of variables in this semi-algebraic set to polynomial in $r$ (and we emphasize that this is possible only for the semi-algebraic set we defined here, not for the semi-algebraic set define in Arora et al \cite{AGKM}). The definition of stability is somewhat delicate, but this is what allows us to get an exponential reduction in the number of variables. 

\subsection{A Semi-Algebraic Set, Take 2}\label{sec:take2}

Here we reduce the number of variables in the semi-algebraic set \textbf{exponentially} by exploiting algebraic dependence among the matrices in the ensembles. 

Consider the ensemble: $B_1, B_2, ... B_p$ where for each $i$, there is a linearly independent set $S_i$ of $s$ columns of $A$ and $(B_i)^{S_i} = (A^U_{S_i})^{-1}$. 
Recall Cramer's Rule:

\begin{lemma} [Cramer]
Let $R$ be an $s \times s$ invertible matrix. Then $(R^{-1})_i^j =  det(R^{-i}_{-j}) / det(R)$
where $R^{-i}_{-j}$ is the matrix $R$ with the $i^{th}$ row and the $j^{th}$ column removed. 
\end{lemma}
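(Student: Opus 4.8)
The plan is to prove Cramer's Rule in the standard way, via the \emph{adjugate} (classical adjoint) matrix; the hypothesis that $R$ is invertible is used only to guarantee $\det(R) \neq 0$ so that the stated ratio makes sense, while the underlying identity holds for every square matrix. First I would define $\mathrm{adj}(R)$ to be the $s \times s$ matrix whose entry in row $i$, column $j$ is the cofactor $(-1)^{i+j} \det(R^{-j}_{-i})$ (note the index transposition), and then reduce the lemma to the single matrix identity
\[
R \cdot \mathrm{adj}(R) = \det(R) \cdot I_s .
\]

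To establish this identity I would compute its entry in row $k$, column $\ell$, which is $\sum_{m=1}^{s} R^k_m \, (-1)^{\ell+m} \det(R^{-\ell}_{-m})$. When $k = \ell$ this is exactly the Laplace (cofactor) expansion of $\det(R)$ along row $k$, so the diagonal entries equal $\det(R)$. When $k \neq \ell$, the same sum is the Laplace expansion along row $\ell$ of the matrix obtained from $R$ by overwriting its $\ell$-th row with a copy of its $k$-th row; that matrix has two equal rows, hence determinant $0$. This ``alien cofactor'' vanishing --- a consequence of the determinant being an alternating multilinear function of the rows --- is the one substantive step; the rest is bookkeeping with indices and signs. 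Dividing the identity by $\det(R) \neq 0$ gives $R^{-1} = \det(R)^{-1} \mathrm{adj}(R)$, and reading off the entry in row $j$, column $i$ yields $(R^{-1})^j_i = (-1)^{i+j}\det(R^{-i}_{-j})/\det(R)$.

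The statement as quoted suppresses the harmless factor $(-1)^{i+j} \in \{-1,+1\}$; this is immaterial for the use made of the lemma later, where the only relevant consequence is that each entry of $R^{-1}$ is, up to sign, a ratio of an $(s-1)\times(s-1)$ minor of $R$ over $\det(R)$, hence a ratio of polynomials in the entries of $R$. I do not anticipate any genuine obstacle beyond the alien-cofactor argument above. If one prefers to avoid naming the adjugate at all, an equivalent route is to note that the $j$-th column of $R^{-1}$ is the solution $x$ of the system $R x = e_j$, invoke the linear-systems form of Cramer's Rule (which itself follows from multilinearity of $\det$ in the columns), and expand the resulting numerator determinant along column $i$ to recognize it as $\pm \det(R^{-i}_{-j})$.
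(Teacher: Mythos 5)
Your proof is correct; it is the standard adjugate argument, with the diagonal entries of $R\cdot\mathrm{adj}(R)$ identified as Laplace expansions of $\det(R)$ and the off-diagonal entries killed by the alien-cofactor (repeated-row) argument. The paper itself offers no proof --- the lemma is simply recalled as a classical fact before being applied --- so there is no authorial argument to compare against. Your observation that the statement as printed drops the cofactor sign $(-1)^{i+j}$ is also right, and you correctly note that this is harmless for the paper's purposes: all that is used downstream (in Section~\ref{sec:take2}) is that each entry of $(A^U_{S_{i'}})^{-1}$ is a ratio of a minor of $A^U_{S_{i'}}$ to its determinant, i.e.\ a ratio of low-degree polynomials in the entries of $A^U$, and the sign does not affect that.
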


\noindent Hence we can instead use a variable for each entry in $A^U$ and each entry in $W_V$. Then the sign of $(A^U_{S_{i'}})^{-1} M_i^U$ can be recovered as a Boolean function of signs of degree at most $s^2$ polynomials in the entries of $A^U$. (Additionally, we can check whether or not the polynomial $det(A_{S_{i'}}^U)$ is non-zero to determine if $S_{i'}$ is linearly independent).

Similarly a constraint of the form $$\sum_{\ell \in T_{j'} \cap S_{i'}} \Big (M_V^j (W_V^{T_{j'}})^{-1} \Big )_{\ell} \Big ( (A^U_{S_{i'}})^{-1} M_i^U \Big )^{\ell} = M^j_i$$ can be written as a degree at most $2r^2$ polynomial constraint in the entries of $A^U$ and $W_V$ by clearing the denominators by $det(W_V^{T_{j'}})$ and $det(A_{S_{i'}}^U)$. 

This new semi-algebraic set has $rs + rt$ variables and has $r(p+q) + (p+q) + mnpq$ polynomials of degree at most $2r^2$ (where the additional polynomials are the denominators in Cramer's Rule). 

Note that Lemma~\ref{lemma:sound} still implies that if $\mathbb{P}$ outputs \textbf{PASS}, $rank^+(M) \leq r$ and a nonnegative matrix factorization of inner-dimension $r$ can be computed from the settings of the variables for the valid point in the semi-algebraic set. And Lemma~\ref{lemma:lex} still implies that this semi-algebraic set is non-empty if $rank^+(M) \leq r$ (since moreover Lemma~\ref{lemma:stable} implies that there is a stable factorization). 

We can now use this reduction -- and known algorithms for solving systems of polynomial inequalities (as described in Section~\ref{sec:intro2}) to give a nearly optimal algorithm for deciding if $M$ has nonnegative rank at most $r$. Additionally, if $rank^+(M) \leq r$ we can also compute the corresponding nonnegative factors $A$ and $W$ to within an additive $\delta$ (at the expense of an extra factor $\log \frac{1}{\delta}$ in the running time). In \cite{Ren}, Renegar gave the first algorithm for deciding if a system of polynomial inequalities has a solution that runs in time exponential in the number of variables. We note that in \cite{Ren2}, Renegar extended this algorithm to also return a $\delta$-approximate solution to an algebraic formulae, and this is the algorithm that we will use to actually compute the factors $A$ and $W$. We also note that these algorithms only assume access to an oracle to the Boolean function $\mathbb{P}$, and our function $\mathbb{P}$ is computable in polynomial time. 

Let $L$ denote the maximum bit complexity of any coefficient in $M$. Then applying the algorithms in \cite{Ren} and \cite{Ren2} with our reduction we obtain: 



\begin{maintheorem*}
There is a $poly(n, m, L) ( r 4^{r+1} m n)^{c r^2} $ time algorithm for deciding if the nonnegative rank of $M$ is at most $r$. Additionally, given $\delta > 0$ (and if $rank^+(M) \leq r$), the algorithm runs in time $poly(n, m, L, \log \frac{1}{\delta}) ( r 4^{r+1} m n)^{c r^2} $ returns factors $\tilde{A}$ and $\tilde{W}$ that are entry-wise close (within an additive $\delta$) to $A$ and $W$ (respectively) that are a nonnegative matrix factorization of $M$ of inner-dimension at most $r$. Furthermore the entries of $\tilde{A}$ and $\tilde{W}$ have rational coordinates with numerators and denominators bounded in bit length by $O(L ( r 4^{r+1} m n)^{c r^2} + \log \frac{1}{\delta})$. 
\end{maintheorem*}

\noindent Alternatively, in the Blum-Shub-Smale (BSS) Model \cite{BCSS} one can instead use the algorithm in \cite{Ren} to decide if $rank^+(M) \leq r$ and the running time of this algorithm is $poly(n, m) + ( r 4^{r+1} m n)^{c r^2}$. 

We emphasize that the above algorithm is based on answering a purely algebraic question: How many variables are needed (in a system of polynomial inequalities) to encode the question does $M$ have nonnegative rank at most $r$? We obtain an exponential improvement on the number of variables, over the results in \cite{AGKM}, and this coupled with algorithms for computing a solution to a system of polynomial inequalities, has an immediate algorithmic implication. The algorithm we obtain here is in fact nearly optimal under the Exponential Time Hypothesis (ETH) of Impagliazzo and Paturi \cite{IP}, since Arora et al \cite{AGKM} showed that an algorithm that decides if $rank^+(M) \leq r$ in $(nm)^{o(r)}$ time would imply a sub-exponential time algorithm for $3$-SAT. {\em It is somewhat surprising that an algorithm for computing the nonnegative rank can be designed based on reasoning about systems of polynomial inequalities, and no algorithm (under plausible complexity assumptions) can do much better. }

\section{Fragile Instances of Nonnegative Rank}

An important property of the rank of a matrix is that if a given matrix $M$ has rank $r$, there is an $r \times r$ submatrix of $M$ that also has rank $r$. Hence, rank admits a small certificate that serves as proof that a matrix does indeed have rank at least $r$ and this fact plays a crucial role in many applications. 

Here, we give highly fragile instances of nonnegative rank: We give a (nonnegative) matrix $M$ of dimension $n \times n$ with $rank^+(M) = 4r$, yet for {\em any} submatrix $N$ of at most  $\frac{n}{3r}$ columns of $M$, $rank^+(M) \leq 3r$. To put this result in context, consider a system of {\em linear} inequalities in $d$ dimensions that is infeasible. A basic result in discrete geometry \cite{Mat} is that there is a subset of at most $d + 1$ of the linear inequalities that is infeasible. In Section~\ref{sec:take2}, we gave a system of {\em polynomial} inequalities in $2r^2$ dimensions that has a solution if and only if $rank^+(M) \leq r$. One might hope that this system is infeasible if and only if there is a small subset of the inequalities that alone is infeasible, and that this would yield a subset of (say) the columns of $M$ that "proves" that $rank^+(M) > r$. Yet this is not the case and systems of polynomial inequalities do not have the "Helly Property" \cite{Mat} (indeed their individual constraints do not necessarily correspond to convex regions). 

To give fragile instances of nonnegative rank, we will make use of a series of reductions of Vavasis \cite{Vav} and a particular gadget in Arora et al \cite{AGKM}. In fact, we make use of a crucial property of the reduction in \cite{Vav} from nonnegative rank to the intermediate simplex problem  -- in a sense, that rows of $M$ are mapped to points and columns of $M$ are mapped to constraints when reducing to the intermediate simplex problem. We will only be interested in the intermediate simplex problem in two dimensions:

\begin{definition}
An instance of the intermediate polygon problem is a polygon $P \subset \mathbb{R}^2$ and a set $S \subset P$ of $|S| = n$ points. The goal is to find a triangle $T$ with $S \subset T \subset P$ in which case, we call this a \textbf{YES} instance and otherwise we call it a \textbf{NO} instance. 
\end{definition}

Our goal is to construct an explicit instance of this problem that is  \textbf{NO} instance and yet restricting to {\em any} set $S' \subset S$ of at most $\frac{n}{3}$ points is a \textbf{YES} instance and we accomplish this latter task by noticing that a particular gadget used in \cite{AGKM} (with a slight modification) has exactly this property. We will then be able to use this instance of the intermediate simplex problem as a gadget to construct fragile instances of nonnegative rank.

We will begin with some simple geometric lemmas and definitions. 

\begin{definition}
Let $C_d = \{ (x, y) | x^2 + y^2 \leq d\}$, and we will write $C$ for $C_1$. Let $o$ denote the origin. 
\end{definition}

\begin{definition}
Let $E$ be the set of all equilateral triangles $T \subset C$ where the vertices of $T$ are on the boundary of $C$. 
\end{definition}

In our arguments, we will also make use of the (largest) inner circle $c$ that is contained in all triangles in $E$. Equivalently, this circle is the intersection of all triangles in $E$:

\begin{definition}
Let $c = \cap_{T \in E} T = C_d$ where $d$ is defined as: (for an arbitrary $T \in E$), $d$ is the minimum distance from the boundary of $T$ to the origin. 
\end{definition}

Our instance of the intermediate polygon problem will be an intersection of $n$ triangles $T$ each in the set $E$. The common intersection of these triangles will contain $c$, and next we prove that in fact any triangle (contained in $C$) that contains $c$ must in fact be equilateral. This will help us reason about what sorts of triangles can make our instance a \textbf{YES} instance. The following two lemmas are proved in \cite{AGKM},  but we include the proofs here for completeness.

\begin{lemma}\cite{AGKM}~\label{lemma:c} 
Any arbitrary triangle $T$ with $c \subset T \subset C$ must be in the set $E$.
\end{lemma}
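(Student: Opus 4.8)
The plan is to use the inner circle $c$ as a rigidity constraint: if a triangle $T$ satisfies $c \subset T \subset C$, then on the one hand $T$ cannot be ``too large'' (it must fit inside the unit circle $C$), and on the other hand it cannot be ``too small or too lopsided'' (it must still contain the circle $c$ of radius $\sqrt{d}$, where $d$ is the inradius-squared of an equilateral triangle inscribed in $C$). I would first record the elementary fact that an equilateral triangle inscribed in $C$ has circumradius $1$ and inradius exactly $\tfrac12$, so $c = C_{1/4}$; the equilateral triangles in $E$ are precisely the ones that are simultaneously inscribed in $C$ and circumscribed about $c$, and they form a one-parameter family (rotations of a single triangle).

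The core step is a comparison of extremal areas. For any triangle $T$ with inradius $\rho$ and circumradius $\mathcal{R}$, Euler's inequality gives $\mathcal{R} \geq 2\rho$, with equality if and only if $T$ is equilateral. Now if $c \subset T \subset C$, then the inradius of $T$ is at least the radius of $c$, i.e. $\rho \geq \tfrac12$, and the circumradius of $T$ is at most $1$ (since $T \subset C$ forces every vertex of $T$ into the closed unit disk, and the smallest enclosing disk of $T$ has radius $\mathcal{R}$ when $T$ is acute, or is otherwise even smaller — in any case $\mathcal{R} \le 1$ because all three vertices lie in $C$ and the circumscribed circle passes through them). Hence $1 \geq \mathcal{R} \geq 2\rho \geq 1$, which forces $\mathcal{R} = 2\rho$ and therefore $T$ is equilateral. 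Finally, an equilateral triangle with $\rho = \tfrac12$ has $\mathcal{R} = 1$, so all three of its vertices lie on the boundary of $C$, placing $T \in E$ by definition.

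The main subtlety — the step I would be most careful about — is the claim that $T \subset C$ implies circumradius $\mathcal{R} \le 1$. This is false in general for obtuse triangles if one interprets $\mathcal{R}$ as the radius of the circumscribed circle, but here it is fine: the circumscribed circle of $T$ passes through all three vertices of $T$, all of which lie in the closed disk $C$ of radius $1$; a circle through three points of a disk of radius $1$ has radius at most $1$ only when... actually this needs care, since a circle through three nearly-collinear points near the boundary can have huge radius. The clean fix is to argue directly with $c$ and the containment $c \subset T$: since $c$ has radius $\tfrac12$, the inradius $\rho(T) \ge \tfrac12$, and $\mathrm{area}(T) = \rho(T)\cdot \mathrm{semiperimeter}(T) \ge \tfrac12 \mathrm{semiperimeter}(T)$; meanwhile $T \subset C$ bounds both $\mathrm{area}(T)$ and $\mathrm{perimeter}(T)$ from above. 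Combining the isoperimetric-type inequality for triangles with these bounds, and using that the extremal configuration (maximal area / minimal perimeter ratio subject to containing $c$ and being contained in $C$) is uniquely the equilateral triangle inscribed in $C$, pins $T$ down to lie in $E$. I would organize the write-up around this area/perimeter comparison rather than around circumradius, precisely to avoid the obtuse-triangle pitfall, and I expect the quantitative core to reduce to checking that $\mathrm{area}(T) \le \mathrm{area}(T_0)$ for $T_0 \in E$ whenever $T \subset C$, together with $\mathrm{area}(T) \ge \mathrm{area}(T_0)$ whenever $c \subset T$ — the two inequalities meeting only at $T \in E$.
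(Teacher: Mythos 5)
Your final argument --- the two-sided area comparison --- is correct and genuinely different from the paper's proof, but the proposal only lands there after two detours that do not work, and the working version is left as a sketch. The Euler-inequality route fails for exactly the reason you flag yourself: $T \subset C$ does not bound the circumradius of $T$ by $1$ (an obtuse triangle in the unit disk can have arbitrarily large circumradius), so the chain $1 \ge \mathcal{R} \ge 2\rho \ge 1$ never gets started. The perimeter-based ``fix'' also does not close: $c \subset T$ only gives $\mathrm{perimeter}(T) \ge \pi$ and $T \subset C$ only gives $\mathrm{perimeter}(T) \le 2\pi$, leaving a factor-of-two gap. What does work is precisely your last sentence: among triangles contained in $C$ the maximum area is $\tfrac{3\sqrt{3}}{4}$, attained only by the inscribed equilateral triangles (i.e.\ only by members of $E$), while among triangles containing the radius-$\tfrac12$ disk $c$ the minimum area is $3\sqrt{3}\cdot(\tfrac12)^2 = \tfrac{3\sqrt{3}}{4}$. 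So $c \subset T \subset C$ forces $\mathrm{area}(T) = \tfrac{3\sqrt{3}}{4}$, and equality in the upper bound already forces $T \in E$. To make this a complete proof you must supply both extremal facts, including uniqueness of the maximizer: e.g.\ if a vertex of a maximal-area triangle lay in the interior of $C$ one could move it to the point of $\partial C$ farthest from the line through the opposite side and strictly increase the area, so all vertices lie on $\partial C$; and among inscribed triangles the area $2\sin A\sin B\sin C$ is uniquely maximized at the equilateral one.

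The paper argues quite differently and more locally: it views the three edges of $T$ from the origin. Since the origin lies in $c \subset T$, the three viewing angles sum to $2\pi$; a chord of $C$ tangent to $c$ subtends viewing angle exactly $\tfrac{2\pi}{3}$, and any edge that is not a full chord of $C$ tangent to $c$ subtends strictly less, so all three edges must be tangent chords, which makes $T$ equilateral and inscribed. That argument is self-contained and needs no extremal or uniqueness facts about areas; your route is shorter to state but outsources the real work to two classical optimization results. Either is acceptable, but if you write up the area version, prove (or precisely cite) both extremal statements and the uniqueness of the maximizer --- that is where all the content lives.
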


\begin{proof}
Consider a triangle $T$ with $c \subset T \subset C$. Then let $e_1, e_2$ and $e_3$ be the three edges of $T$ and let $\theta_1, \theta_2$ and $\theta_3$ be the {\em viewing angle} from the origin $o$, namely $\theta_i$ is the angle formed by $\langle a_i, o, b_i$ where $a_i$ and $b_i$ are the endpoints of $e_i$. 

Since $o \in T$, we have that $\theta_1 + \theta_2 + \theta_3 = 2 \pi$. Consider an edge $e_i$. We will prove, by contradiction, that $e_i \cap c$ must contain exactly one point (i.e. $e_i $ must be tangent to the circle $c$). Suppose not - since $c \subset T$, we must have that $e_i \cap c = \empty$. Then let $\ell$ be the line parallel to $e_i$ that is tangent to $c$. Let $e'_i$ be the intersection of $\ell$ with $C$. The viewing angle $\theta'_i$ of $e'_i$ is strictly larger than $\theta_i$, yet the intersection of any line $\ell$ tangent to $c$ with $C$ has viewing angle exactly $\frac{2\pi}{3}$ and hence we conclude that $\theta_1 + \theta_2 + \theta_3 < 2 \pi$, which is a contradiction.

So each $e_i$ is tangent to $c$ and in fact we can use a similar argument to conclude that each $e_i$ must be exactly the intersection of a line $\ell$ tangent to $c$ with $C$ (otherwise, again we would have that $\theta_1 + \theta_2 + \theta_3 < 2 \pi$). 

Hence, we conclude that each edge of $T$ has the same length, and each endpoint is on the boundary of $C$ so $T \in E$. 
\end{proof}

Throughout the remainder of this section, consider any finite set $T_1, T_2, ... T_n \in E$ of equilateral triangles, and let $S$ be the vertices of $\cap_{i = 1}^n T_i$. 

\begin{lemma} \cite{AGKM} ~\label{lemma:ti}
Let $T$ be a triangle with $S \subset T \subset C$. Then $T \in \{T_1, T_2, ... T_n\}$. 
\end{lemma}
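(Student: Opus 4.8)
## Proof Proposal for Lemma~\ref{lemma:ti}

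The plan is to reduce to Lemma~\ref{lemma:c}. That lemma says any triangle $T$ with $c \subset T \subset C$ must lie in $E$, so if I can show that any triangle $T$ containing $S$ (and contained in $C$) must contain the inner circle $c$, then $T \in E$; from there I need the extra step that such a $T$ must actually be one of the $T_i$ rather than some arbitrary member of $E$.

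First I would argue $c \subset T$. The key geometric fact is that $S$, the vertex set of $\cap_{i=1}^n T_i$, is ``rich enough'' to pin down the inner circle. Each $T_i \in E$ has its three edges tangent to $c$ (this is exactly the content of the tangency argument in the proof of Lemma~\ref{lemma:c}, applied to equilateral triangles inscribed in $C$). The boundary of the convex polygon $Q := \cap_{i=1}^n T_i$ is made up of edge-segments, each lying on an edge of some $T_i$, hence each lying on a line tangent to $c$; consecutive such segments meet at the vertices in $S$. Since $Q$ is the intersection of triangles all of which contain $c$, we have $c \subset Q$, and $Q$ is a convex polygon all of whose supporting edges are tangent to $c$ --- so $Q$ is a circumscribed polygon of $c$ with vertex set $S$. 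Now if a triangle $T$ satisfies $S \subset T$, then $T \supset \mathrm{conv}(S) = Q \supset c$. (Here I should check $Q = \mathrm{conv}(S)$, which holds because $Q$ is a convex polygon and $S$ is precisely its vertex set.) Combined with $T \subset C$, Lemma~\ref{lemma:c} gives $T \in E$.

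Next, knowing $T \in E$, I would show $T \in \{T_1,\dots,T_n\}$. Since $T \in E$, $T$ is equilateral, inscribed in $C$, circumscribing $c$, and each of its three edges is tangent to $c$ at the midpoint of that edge. The constraint $S \subset T$ forces each vertex $v \in S$ to lie in the closed triangle $T$; in particular the three edges of $T$ cannot cut off any vertex of $Q$. I claim this forces each edge-line of $T$ to coincide with one of the edge-lines appearing in the boundary of $Q$. Indeed, an equilateral triangle inscribed in $C$ and containing $c$ is determined by a single rotational parameter $\alpha \in [0, 2\pi/3)$ (the orientation of one tangent line to $c$), and as $\alpha$ varies the three tangent lines sweep all tangent lines to $c$. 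If some edge-line of $T$ is a tangent line to $c$ that is \emph{not} among the finitely many tangent lines supporting the edges of $Q$, then that edge, being tangent to $c$ and lying in $C$, is a chord of $C$ with viewing angle exactly $2\pi/3$ from the origin, and it must strictly separate some vertex of $Q$ from $c$ (because between any two consecutive support-lines of $Q$ there is at least one vertex of $Q$, and a tangent line distinct from both support-lines through that region cuts the vertex off). That vertex is in $S$ and would not be in $T$, a contradiction. Hence all three edge-lines of $T$ are support-lines of $Q$; since $T \in E$ is determined by its edge-lines and the $T_i$ are also circumscribing $c$, matching edge-lines forces $T$ to equal one of the $T_i$ (one does have to note that the three support-lines chosen are consistent, i.e. pairwise at $60^\circ$, which the $T_i$ already guarantee).

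The main obstacle I anticipate is the second half: carefully ruling out a ``rogue'' equilateral triangle $T \in E$ that contains all of $S$ but whose edges sit strictly between the $T_i$'s edges. The clean way to handle this is the viewing-angle bookkeeping already used in Lemma~\ref{lemma:c}: any edge-line of $T$ tangent to $c$ contributes viewing angle exactly $2\pi/3$, the three must sum to $2\pi$, and this rigidity plus the requirement that no vertex of $Q$ be cut off leaves no room for a rogue triangle. I would also want to dispatch the degenerate possibility that $Q$ has fewer than three vertices (e.g. if the $T_i$ intersect in a single point or a segment), in which case $S \subset T \subset C$ with $c \subset T$ still forces $T \in E$ and a short separate argument (or the convention that then \emph{every} $T_i$ qualifies trivially) closes the case; I expect the paper handles this by implicitly assuming the $T_i$ are chosen so that $Q$ is a genuine polygon with enough vertices.
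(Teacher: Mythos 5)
Your proof is correct and follows essentially the same route as the paper: first deduce $c \subset \mathrm{conv}(S) \subset T$ and invoke Lemma~\ref{lemma:c} to conclude $T \in E$, then use the rigidity of tangency to $c$ to rule out a rogue member of $E$. The only difference is cosmetic and sits in the second half — the paper notes that the three tangency points of a rogue $T$ with $c$ are strictly interior to every $T_i$ (hence to $\mathrm{conv}(S)$) while lying on the boundary of $T$, whereas you argue the dual statement that a rogue edge-line, being tangent to $c$ strictly between two consecutive support lines of $\mathrm{conv}(S)$, strictly separates a vertex of $S$ from $c$; both are valid instantiations of the same idea.
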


\begin{proof}
Clearly we have that $conv(S) \subset T$ since $T$ is convex, and we also have that $c \subset conv(S) = \cap_{i = 1}^n T_i$. So by Lemma~\ref{lemma:c}, we can conclude that $T$ must be in $E$. Suppose that $T \notin  \{T_1, T_2, ... T_n\}$. 

Let $\{p_1, p_2, p_3\} = T \cap c$ (i.e. these are the three points on the boundary of $T$ closest to the origin). Similarly, for each $T_i$ let $\{p_1^i, p_2^i, p_3^i\} = T_i \cap c$. Then $\{p_1, p_2, p_3\}$ is a rotation (by $< \frac{2\pi}{3}$) of $\{p_1^i, p_2^i, p_3^i\} $ and hence $\{p_1, p_2, p_3\}$ are each strictly in the interior of $T_i$. 

Hence, $\{p_1, p_2, p_3\}$ are on the boundary of $conv(S) \cap T$ but not on the boundary of $conv(S)$, so $T$ cannot contain $conv(S)$. 
\end{proof}

\begin{lemma}~\label{lemma:bijection}
For each edge $e_j$ of a triangle $T_i$, $|e_j \cap S| = 2$ and furthermore for each $s \in S$, $s$ intersects the edges of exactly two (distinct) triangles in $\{T_1, T_2, ... T_n\}$. 
\end{lemma}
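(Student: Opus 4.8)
The plan is to pin down the polygon $P:=\bigcap_{i=1}^n T_i$ exactly, in terms of the edge–tangent–lines of the $T_i$ to the inner circle $c$, and then read off both statements; I will assume $n\ge 2$ (for $n\le 1$ the second assertion is vacuous in context and the first is immediate). First I would record three elementary facts about $E$ (implicit in the setup of Lemma~\ref{lemma:c}): for every $T\in E$, the circle $c$ is the incircle of $T$, so each of $T$'s three edges lies on a line tangent to $c$; the three tangency points lie on $\partial c$ and, by the $2\pi/3$ rotational symmetry of $T$ about the origin, are spaced exactly $2\pi/3$ apart; and if $T,T'\in E$ share a tangency point they share the tangent line there, which as a chord of $C$ determines a unique inscribed equilateral triangle, so $T=T'$. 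Hence, as $T_1,\dots,T_n$ are distinct, the $3n$ tangency points are distinct; list them cyclically as $\alpha_1<\dots<\alpha_{3n}$ around $\partial c$, let $T_{i(m)}$ be the (unique) triangle owning the edge tangent at $\alpha_m$, let $\ell_m$ be the tangent line to $c$ at $\alpha_m$, and let $H_m$ be the closed half-plane bounded by $\ell_m$ containing $c$; then each $T_i$ is the intersection of its three half-planes, so $P=\bigcap_{m=1}^{3n}H_m$.

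The key step is a short computation about tangent lines to $c=C_d$. For $m\ne m'$ with $\alpha_m,\alpha_{m'}$ not antipodal, $\ell_m\cap\ell_{m'}$ is the single point $p_{m,m'}$ in direction $\tfrac12(\alpha_m+\alpha_{m'})$ at distance $d/\cos\bigl(\tfrac12|\alpha_m-\alpha_{m'}|\bigr)$ from the origin, and $p_{m,m'}\in H_k$ iff $\alpha_k$ does not lie in the open short arc strictly between $\alpha_m$ and $\alpha_{m'}$. Two consequences: if $\alpha_m,\alpha_{m'}$ are cyclically consecutive among the $3n$ points then nothing lies strictly between them, so $v_j:=p_{j,j+1}\in\bigcap_k H_k=P$ (indices mod $3n$); and if they are not consecutive, some $\alpha_k$ lies on that arc, so $p_{m,m'}\notin P$. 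Since the $v_j$ point in $3n$ distinct directions they are distinct, so $P$ is a convex $3n$-gon with vertex set $S=\{v_1,\dots,v_{3n}\}$, and the face of $P$ exposed by $\ell_j$ is the edge $[v_{j-1},v_j]=P\cap\ell_j$, a genuine segment since $v_{j-1}\ne v_j$; moreover a vertex $v_{j'}$ lies on $\ell_j$ only when $\alpha_j\in\{\alpha_{j'},\alpha_{j'+1}\}$, i.e. only for $j'\in\{j-1,j\}$.

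Both claims now follow. For the first, an edge $e$ of $T_i$ is the chord $\ell_j\cap C$ for the unique $j$ with $i(j)=i$ and $\alpha_j$ the tangency point of $e$; then $e=\ell_j\cap C\supseteq\ell_j\cap P=[v_{j-1},v_j]$, while the only elements of $S$ lying on $\ell_j\supseteq e$ are $v_{j-1},v_j$, so $|e\cap S|=2$. For the second, $v_j=\ell_j\cap\ell_{j+1}$ lies on an edge of $T_{i(j)}$ and of $T_{i(j+1)}$ (using $v_j\in P\subseteq C$, so membership in a tangent line $\ell_m$ coincides with membership in the chord $\ell_m\cap C$), and $v_j$ lies on an edge of $T_k$ only if $v_j\in\ell_m$ for a tangency point $\alpha_m$ of $T_k$, forcing $k\in\{i(j),i(j+1)\}$. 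Finally $i(j)\ne i(j+1)$: if $\alpha_j,\alpha_{j+1}$ belonged to one triangle $T$, then being globally consecutive and adjacent among $T$'s $2\pi/3$-spaced tangency points, the open arc between them has length exactly $2\pi/3$ and meets no tangency point of any $T_i$; then for any $T'\ne T$ (which exists as $n\ge2$) this arc lies in the closure of one of $T'$'s own three $2\pi/3$-arcs, hence equals it, making $\alpha_j,\alpha_{j+1}$ tangency points of $T'$ too — impossible. So $v_j$ meets the edges of exactly the two distinct triangles $T_{i(j)},T_{i(j+1)}$.

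I expect the trigonometric claim of the second paragraph to be the main obstacle — in particular the "un-eclipsed" half of it, that consecutive tangent lines always meet at an honest vertex of $P$ rather than at a point cut off by some other $H_k$; everything downstream is bookkeeping once that and the three structural facts about $E$ are established, the only extra care being the degenerate small-$n$ cases and the remark that distinct inscribed equilateral triangles cannot share an edge-line.
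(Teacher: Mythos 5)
Your proof is correct, and it takes a genuinely more explicit route than the paper does. The paper in fact never proves this lemma directly: the argument it supplies (placed under the corollary $|S|=3n$) only establishes that each of the $3n$ triangle edges $e_j$ contributes one edge of $conv(S)$ --- via the soft observation that the point of $e_j$ closest to the origin is strictly interior to every other triangle (the rotation argument of Lemma~\ref{lemma:ti}) and hence lies on the boundary of $conv(S)$ while lying on no other triangle's edge --- and it leaves the two vertex-level assertions of the lemma ($|e_j\cap S|=2$, and each $s\in S$ lying on edges of exactly two distinct triangles) as implicit consequences of the edge--vertex incidence structure of a convex $3n$-gon. You instead coordinatize by the tangency points on the incircle $c$, compute the pairwise intersections of the tangent lines, and identify $P=\bigcap_{i}T_i$ exactly as the $3n$-gon whose vertices are the intersections of cyclically consecutive tangent lines. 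What this buys is precisely the two points the paper glosses over: your formula for $p_{m,m'}$ rules out three of the $3n$ lines being concurrent at a point of $S$ (without which ``exactly two'' could fail), and your arc-length argument shows that cyclically consecutive tangency points always belong to distinct triangles (without which ``two \emph{distinct} triangles'' could fail). The one loose end is that your vertex formula presupposes $\alpha_m,\alpha_{m'}$ are not antipodal; for consecutive tangency points this is automatic, since each triangle's own tangency points are spaced $2\pi/3$ apart and hence every gap is at most $2\pi/3<\pi$, and you should record that in one line.
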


\begin{corollary}
$|S| = 3n$
\end{corollary}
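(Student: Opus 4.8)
The plan is to derive the count directly from Lemma~\ref{lemma:bijection} by a double-counting (incidence-counting) argument between the point set $S$ and the edges of the triangles $T_1, \dots, T_n$. First I would set up the bipartite incidence structure: on one side the $n$ triangles, each contributing $3$ edges, so $3n$ edges in total; on the other side the points of $S$. Define an incidence to be a pair $(s, e)$ with $s \in S$ and $e$ an edge of some $T_i$ such that $s \in e$. I would count the total number $N$ of such incidences in two ways.

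Counting by edges: Lemma~\ref{lemma:bijection} asserts that every edge $e_j$ of every triangle $T_i$ satisfies $|e_j \cap S| = 2$, so each of the $3n$ edges is incident to exactly $2$ points of $S$; hence $N = 2 \cdot 3n = 6n$. Counting by points: the same lemma asserts that each $s \in S$ lies on the edges of exactly two distinct triangles in $\{T_1, \dots, T_n\}$. Here I would need one small observation to convert "lies on the edges of exactly two triangles" into an incidence count, namely that a given point $s$ lies on at most one edge of any single triangle $T_i$ — equivalently, $s$ is not a vertex of $T_i$. This follows because the vertices of each $T_i$ lie on the boundary of $C$ (as $T_i \in E$), whereas points of $S = \bigcap_i T_i$ lie strictly inside $C$ except possibly... actually more carefully: $S$ consists of the vertices of $\bigcap_{i} T_i$, and by the tangency structure from Lemma~\ref{lemma:c} these vertices are interior to $C$, so no $s \in S$ can coincide with a vertex of any $T_i$; therefore $s$ meets each triangle it touches in exactly one edge. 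Consequently each $s \in S$ contributes exactly $2$ incidences, giving $N = 2|S|$.

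Equating the two counts yields $2|S| = 6n$, hence $|S| = 3n$, as claimed. I expect the only genuine (and minor) obstacle to be the bookkeeping remark that a point of $S$ is never a vertex of one of the $T_i$ — this is needed so that "on the edges of two triangles" translates cleanly into "$2$ edge-incidences" rather than potentially more; if one prefers to avoid invoking the geometry of $C$, one can instead read this directly out of the proof of Lemma~\ref{lemma:bijection} (each $s \in S$ is a vertex of $\bigcap_i T_i$ cut out by exactly two of the bounding edges, one from each of two distinct $T_i$). Everything else is a one-line double count, so no heavy calculation is involved.
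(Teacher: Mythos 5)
Your double-counting derivation is correct, and it takes a genuinely different route from the paper's. The paper does not deduce $|S|=3n$ from the statement of Lemma~\ref{lemma:bijection} by counting incidences; instead it exhibits a bijection between the $3n$ edges of the triangles $T_1,\dots,T_n$ and the edges of the convex polygon $conv(S)$: every edge of $conv(S)$ is a subsegment of a unique triangle edge, and conversely, for each triangle edge $e_j$, the point of $e_j$ closest to the origin lies on the boundary of $conv(S)$ and on no other triangle edge (by the rotation argument from Lemma~\ref{lemma:ti}), so $e_j$ contributes an edge of $conv(S)$. The count $|S|=3n$ then follows because a convex polygon has as many vertices as edges. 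Your argument instead takes the lemma at face value and counts point--edge incidences two ways: $2 \cdot 3n$ over edges versus $2|S|$ over points. The caveat you flag is the right one and is genuinely needed: to convert ``$s$ meets the edges of exactly two triangles'' into ``$s$ contributes exactly two incidences'' you must rule out $s$ being a vertex of some $T_i$ (which would put it on two edges of that one triangle), and your justification is sound --- the vertices of each $T_i$ lie on the boundary of $C$, whereas (for $n \geq 2$, which is implicit, since the lemma is false for $n=1$) no boundary point of $C$ belongs to all the $T_j$, so the vertices of $\cap_i T_i$ avoid the vertices of every $T_i$. What your approach buys is a clean logical separation, with the corollary following formally from the lemma; what the paper's approach buys is that the one geometric argument (the edge bijection) simultaneously establishes the lemma itself, which is what its displayed proof is really doing.
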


\begin{proof}
Each edge of $conv(S)$ is by definition a subsegment of some unique edge $e_j$ of some triangle in $\{T_1, T_2, ... T_n\}$. All we need to show is that to each edge $e_j$ (of some triangle in $\{T_1, T_2, ... T_n\}$) we can find an edge of $conv(S)$ which is a subsegment of $e_j$:

Let $p_j$ be the closest point on $e_j$ to the origin. As we argued in Lemma~\ref{lemma:ti}, for all other triangles, $p_j$ is strictly in the interior. So the ray from the origin to $p_j$ hits the segment $e_j$ first (out of all edges of all triangles in $E$). Hence $p_j$ is on the boundary of $conv(S)$, but only one edge (namely $e_j$) contains $p_j$ so the edge of $conv(S)$ that contains $p_j$ is a subsegment of $e_j$, as desired. 
\end{proof}

As we noted, the gadget that we use here is a slight modification of the one in \cite{AGKM} -- and the modification that we need involves rescaling:

\begin{definition}
For each triangle $T \in E$, define $T^{(1-\epsilon)}$ as the scaling down of $T$ such that the vertices of $T^{(1-\epsilon)}$ are on the boundary of $C_{1 - \epsilon}$. 
\end{definition}

This rescaling is precisely what ensures that the original instance is a \textbf{NO} instance, but as we will see, if $\epsilon$ is sufficiently small then {\em every} small subset of $S$ is a \textbf{YES} instance. 

\begin{definition}
Let $S_i$ be the vertices of $conv(S) \cap T_i^{(1-\epsilon)}$. 
\end{definition}

\begin{claim}~\label{claim:scale}
If $\epsilon$ is sufficiently small, then $S_i = S - T_i \cap S$. 
\end{claim}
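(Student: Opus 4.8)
The plan is to analyze how shrinking $T_i$ to $T_i^{(1-\epsilon)}$ affects which vertices of $\mathrm{conv}(S)$ survive in $\mathrm{conv}(S) \cap T_i^{(1-\epsilon)}$. Recall (Lemma~\ref{lemma:bijection} and its corollary) that $|S| = 3n$, that each edge $e_j$ of each triangle $T_k$ contains exactly two points of $S$, and that each $s \in S$ lies on the edges of exactly two distinct triangles. Fix $i$. The three points $S \cap T_i$ are precisely the $|S|$-points that lie on the two-triangle list containing $T_i$; each such $s$ lies on an edge of $T_i$ and on an edge of exactly one other triangle. First I would show that the points in $S \setminus (S \cap T_i)$ all lie strictly inside $T_i^{(1-\epsilon)}$ for $\epsilon$ small, and hence remain vertices of $\mathrm{conv}(S) \cap T_i^{(1-\epsilon)}$; then I would show that the three points in $S \cap T_i$ are cut off (lie strictly outside $T_i^{(1-\epsilon)}$), and that the new vertices created by the cut are not counted as elements of $S_i$ in the statement — wait, re-reading, $S_i$ is defined as the vertex set of the polygon $\mathrm{conv}(S)\cap T_i^{(1-\epsilon)}$, so the claim $S_i = S - T_i\cap S$ is asserting that the cutting introduces no genuinely new vertices beyond the ones already in $S \setminus (S\cap T_i)$. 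So the real content is: (a) every point of $S\setminus(S\cap T_i)$ stays strictly interior to $T_i^{(1-\epsilon)}$, (b) every point of $S\cap T_i$ is strictly exterior to $T_i^{(1-\epsilon)}$, and (c) the boundary $\partial T_i^{(1-\epsilon)} \cap \mathrm{conv}(S)$ contributes no vertices — i.e., each edge of $T_i^{(1-\epsilon)}$ meets $\mathrm{conv}(S)$ in a way that does not create a vertex of the intersection polygon, or any such intersection point coincides with nothing and the "vertices" the edge passes through are exactly at points of $S\setminus(S\cap T_i)$ — more carefully, an edge of $T_i^{(1-\epsilon)}$ slices across $\mathrm{conv}(S)$ and the two points where it enters/exits $\mathrm{conv}(S)$ become vertices of the clipped polygon; so for the claim to hold these entry/exit points must themselves be points of $S$.

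So the argument I would actually run is the following. For part (a): a point $s \in S\setminus(S\cap T_i)$ is a vertex of $\mathrm{conv}(S)$ lying on edges of two triangles $T_k, T_\ell$ with $k,\ell\neq i$; by Lemma~\ref{lemma:ti}-style reasoning $s$ lies strictly in the interior of every other triangle in $E$ that does not have $s$ on its boundary, in particular strictly inside $T_i$, hence at positive distance from $\partial T_i$, hence inside $T_i^{(1-\epsilon)}$ once $\epsilon$ is smaller than (a quantity controlled by) that distance. For part (b): each $s\in S\cap T_i$ lies on an edge $e$ of $T_i$, and since the vertices of $T_i$ are on $\partial C$ while the vertices of $T_i^{(1-\epsilon)}$ are on $\partial C_{1-\epsilon}$, the edge $e$ moves inward by a definite amount; I need $s$ to end up strictly outside $T_i^{(1-\epsilon)}$, which holds provided $s$ is not too close to the tangency point $p_j$ of $e$ with the inner circle $c$ — and indeed the two points of $S$ on $e$ are the two endpoints of an edge of $\mathrm{conv}(S)$ that is a subsegment of $e$, bounded away from $p_j$ by a margin depending only on $n$. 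Quantitatively, one picks $\epsilon$ small enough (depending on $n$) that the inward shift of each edge of $T_i$ is less than the minimum over all $s\in S$ of the distance from $s$ to the nearest-to-origin point of its edge. For part (c): the clipped polygon $\mathrm{conv}(S)\cap T_i^{(1-\epsilon)}$ — after (a) and (b) — has as its vertices exactly the retained points $S\setminus(S\cap T_i)$ together with possible new vertices on $\partial T_i^{(1-\epsilon)}$; one checks that because exactly the three points $S\cap T_i$ get removed and they sit on the three distinct edges of $T_i$, each edge of $T_i^{(1-\epsilon)}$ clips off exactly one corner of $\mathrm{conv}(S)$, and the resulting edge of $T_i^{(1-\epsilon)}$ already passes through two vertices of $\mathrm{conv}(S)$ lying in $S\setminus(S\cap T_i)$ — namely the two neighbors (along $\partial\mathrm{conv}(S)$) of the removed vertex — so no new vertex is introduced. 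This last point uses the combinatorial structure from Lemma~\ref{lemma:bijection}: the removed vertex $s$ (on edge $e_j$ of $T_i$) has, as its two neighbors on $\partial\mathrm{conv}(S)$, the points where $\mathrm{conv}(S)$ turns, and these lie on the line of $e_j$ hence on the parallel translate $e_j^{(1-\epsilon)}$ only in the limit — so in fact I'd instead argue that the new vertices are *defined* to be the intersection points of $e_j^{(1-\epsilon)}$ with the two edges of $\mathrm{conv}(S)$ emanating from $s$, and the claim $S_i = S - T_i\cap S$ is simply asserting these new vertices should not be present, which forces the interpretation that the claim is about which *old* vertices survive, with the new clip-vertices understood as lying on the same edges — I would reconcile this by showing $\mathrm{conv}(S)\cap T_i^{(1-\epsilon)} = \mathrm{conv}(S\setminus(S\cap T_i))$ exactly, i.e. that removing the three points $S\cap T_i$ from $S$ and taking the convex hull already lands inside $T_i^{(1-\epsilon)}$, so no clipping occurs at all. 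That reformulation is cleanest: prove $\mathrm{conv}(S\setminus(S\cap T_i)) \subseteq T_i^{(1-\epsilon)}$ (which follows from part (a) applied to all retained points, since $T_i^{(1-\epsilon)}$ is convex), and prove $S\cap T_i \not\subseteq T_i^{(1-\epsilon)}$ so these points are genuinely dropped (part (b)); then $S_i$, the vertex set of $\mathrm{conv}(S)\cap T_i^{(1-\epsilon)} = \mathrm{conv}(S)\cap T_i^{(1-\epsilon)}$, intersected with the structure — hmm, this still needs $\mathrm{conv}(S)\cap T_i^{(1-\epsilon)} = \mathrm{conv}(S\setminus(S\cap T_i))$, whose "$\subseteq$" direction is the nontrivial one and would follow if the three dropped points are "extreme enough" that cutting them with the three slightly-shrunk edges recovers exactly the hull of the rest.

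The main obstacle I anticipate is making part (b) and the exactness in part (c) quantitative and simultaneous: one must choose a single $\epsilon = \epsilon(n)$ that (i) is small enough that no retained point of $S$ escapes $T_i^{(1-\epsilon)}$ for any $i$, yet (ii) is still small enough (this is automatic) / but also one needs $\epsilon$ not so small that the three target points fail to be cut — but dropping points only gets easier as $\epsilon$ grows, so the binding constraint is purely the upper bound on $\epsilon$ from (i), together with checking that the geometry of the clip is "clean," i.e. that $\mathrm{conv}(S)\cap T_i^{(1-\epsilon)}$ equals $\mathrm{conv}$ of the retained points and not something with extra vertices on $\partial T_i^{(1-\epsilon)}$. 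I would handle the latter by the observation that the two edges of $\mathrm{conv}(S)$ incident to a dropped vertex $s$ both lie along edges $e$ of triangles $T_k$ (one of which is $T_i$, one of which is some $T_\ell$), and the line of the $T_i$-edge, when pushed in to $T_i^{(1-\epsilon)}$, still separates $s$ from the rest of the hull while passing on the far side of all retained vertices — for $\epsilon$ below the threshold from (i) this is exactly the statement that clipping $\mathrm{conv}(S)$ by the three shrunk edges of $T_i$ removes precisely $s_1,s_2,s_3 = S\cap T_i$ and nothing else, giving $S_i = S\setminus(S\cap T_i)$. All the distances involved (distance from a retained vertex to $\partial T_i$, distance from a point of $S$ on an edge to the origin-nearest point of that edge) are positive and there are finitely many, so their minimum is a positive $\epsilon$-threshold; I would simply state that $\epsilon$ is taken below it and below the analogous thresholds needed later for the YES-instance argument.
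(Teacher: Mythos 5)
Your parts (a) and (b) are essentially the paper's argument: one chooses $\epsilon$ small enough that, for each edge $e_j$ of $T_i$, the open slab between $e_j$ and $e_j^{(1-\epsilon)}$ contains no point of $S$; then every $s \in S$ not on $\partial T_i$ remains in $T_i^{(1-\epsilon)}$, while every $s$ lying on some edge $e_j$ is on the non-origin side of the line through $e_j^{(1-\epsilon)}$ and is therefore excluded. Two small corrections to your setup. First, by Lemma~\ref{lemma:bijection} each of the three edges of $T_i$ carries exactly two points of $S$ and no point of $S$ lies on two edges of the same triangle, so $|\partial T_i \cap S| = 6$, not $3$. Second, your worry in (b) about $s$ being close to the tangency point is unnecessary: every point of $e_j$, wherever it sits on that edge, lies outside $T_i^{(1-\epsilon)}$, because the entire line through $e_j$ is on the far side of the parallel line through $e_j^{(1-\epsilon)}$.

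Your part (c) correctly flags a genuine imprecision in the claim as literally stated, but your proposed repair does not work. Clipping $\mathrm{conv}(S)$ by the three shrunk edges really does create new vertices: the edge of $\mathrm{conv}(S)$ contained in $e_j$ is removed together with both of its endpoints, and $e_j^{(1-\epsilon)}$ meets the two adjacent edges of $\mathrm{conv}(S)$ at two new points which, for small $\epsilon$, lie strictly between consecutive points of $S$ and hence are not in $S$. Consequently the identity $\mathrm{conv}(S)\cap T_i^{(1-\epsilon)} = \mathrm{conv}\bigl(S\setminus(\partial T_i\cap S)\bigr)$ that you fall back on is false, and the vertex set of $\mathrm{conv}(S)\cap T_i^{(1-\epsilon)}$ is not literally $S\setminus(\partial T_i\cap S)$. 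The paper's own one-line proof passes over this point as well. The correct resolution is not to force (c) through but to read the claim as the membership statement $S\cap T_i^{(1-\epsilon)} = S\setminus(\partial T_i\cap S)$; this is precisely what is invoked later (to conclude $T_i^{(1-\epsilon)}\cap S' = S'$ when no point of $S'$ lies on $\partial T_i$), and it is exactly what your (a) and (b) establish. So your core argument is correct and coincides with the paper's; you should simply drop (c) and record the reinterpretation.
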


\begin{proof}
Recall that $conv(S) = \cap_{i = 1}^n T_i$. Consider an edge $e_j$ of $T_i$. Using Lemma~\ref{lemma:bijection}, $|e_j \cap S| = 2$, and we can choose $\epsilon$ small enough such that the region strictly between $e_j^{(1- \epsilon)}$ (namely, the corresponding edge in $T_i^{(1-\epsilon)}$) and $e_j$ does not contain any points in $S$, in which case $S_i = S - T_i \cap S$. 
\end{proof}

So consider the following instance of the intermediate polygon problem:
\begin{itemize}
\item Let $P = conv(\mbox{vertices in }T_i^{(1-\epsilon)})$ 
\item and let $S = \mbox{vertices of }\cap_{i = 1}^n T_i$. 
\end{itemize}

\begin{claim}
$(P, S)$ is a \textbf{NO} instance. 
\end{claim}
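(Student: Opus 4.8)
The plan is to show that no triangle $T$ satisfies $S \subset T \subset P$, where $P = \mathrm{conv}(\text{vertices in } T_i^{(1-\epsilon)})$ for some fixed index $i$ and $S$ is the vertex set of $\cap_{i=1}^n T_i$. First I would observe that $P \subset C$ (in fact $P \subset C_{1-\epsilon} \subset C$, since the vertices of each $T_i^{(1-\epsilon)}$ lie on the boundary of $C_{1-\epsilon}$). Hence any triangle $T$ with $S \subset T \subset P$ would in particular satisfy $S \subset T \subset C$, and we could invoke Lemma~\ref{lemma:ti} to conclude $T \in \{T_1, T_2, \dots, T_n\}$.

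Next I would derive a contradiction from $T = T_k$ for some $k$. The key point is that $T_k \not\subset P$: the triangle $T_k$ has its vertices on the boundary of $C$ (since $T_k \in E$), whereas $P$ is contained in the strictly smaller disk $C_{1-\epsilon}$. More precisely, each vertex $v$ of $T_k$ satisfies $\|v\| = 1 > 1-\epsilon \geq \|p\|$ for every $p \in P$, so $v \notin P$, and thus $T_k \not\subset P$. This directly contradicts the requirement $T \subset P$. Therefore no such triangle $T$ exists, and $(P,S)$ is a \textbf{NO} instance.

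The only genuinely delicate point — and the one I expect to be the main obstacle to making this fully rigorous — is verifying that $S \subset P$ in the first place, i.e.\ that the instance is well-posed (the problem asks for $S \subset T \subset P$, and if $S \not\subset P$ the instance is vacuously \textbf{NO}, but presumably the intended reading is that $S \subset P$ holds). This should follow because $S$ is the vertex set of $\cap_{j=1}^n T_j$, which contains the inner circle $c = C_d$ with $d$ strictly less than $1$; choosing $\epsilon$ small enough (smaller than what Claim~\ref{claim:scale} already requires) guarantees $c \subset C_{1-\epsilon}$ and, more to the point, that $\mathrm{conv}(S) = \cap_j T_j \subset \mathrm{conv}(\text{vertices of } T_i^{(1-\epsilon)}) = P$ fails in general — so actually the cleaner route is to note that whether or not $S \subset P$, the argument above shows there is no triangle $T$ with $S \subset T \subset P$ (since such a $T$ forces $T \in \{T_1,\dots,T_n\}$ via Lemma~\ref{lemma:ti}, which already presupposes $S \subset T \subset C$, and then $T \not\subset P$). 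So the combinatorial heart is Lemma~\ref{lemma:ti}, and the new ingredient here is purely the strict containment $P \subset C_{1-\epsilon} \subsetneq C$ forcing every candidate triangle out of $P$.

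In summary, the steps in order are: (1) note $P \subset C_{1-\epsilon} \subset C$; (2) suppose $T$ is a triangle with $S \subset T \subset P \subset C$ and apply Lemma~\ref{lemma:ti} to get $T = T_k$ for some $k \in [n]$; (3) observe each vertex of $T_k$ lies on $\partial C$, hence has norm $1 > 1-\epsilon$, hence lies outside $P$; (4) conclude $T_k \not\subset P$, contradicting $T \subset P$. Hence $(P,S)$ is a \textbf{NO} instance.
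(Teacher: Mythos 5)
Your proof is correct and follows essentially the same route as the paper: both arguments note $P \subset C_{1-\epsilon}$ and then show that any triangle $T$ with $S \subset T \subset C$ must have its vertices on the boundary of $C$, hence cannot lie inside $P$. The only cosmetic difference is that you invoke the stronger Lemma~\ref{lemma:ti} (forcing $T \in \{T_1,\dots,T_n\}$) where the paper only needs Lemma~\ref{lemma:c} (forcing $T \in E$), and your digression about whether $S \subset P$ is harmless since, as you observe, the contradiction does not depend on it.
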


\begin{proof}
$P \subset C_{1 - \epsilon}$ by the definition of $T_i^{(1-\epsilon)}$, and using Lemma~\ref{lemma:c}, any triangle $T$ contained in $C$ that contains $S$ must be in the set $E$; and since any triangle in $E$ has its vertices on the boundary of $C$, we conclude that $T$ is not contained in $C_{1-\epsilon}$ and hence $(P, S)$ is indeed unsatisfiable. 
\end{proof}

\begin{lemma}
For any $S' \subset S$ with $|S'| < n$, $(P, S')$ is a \textbf{YES} instance. 
\end{lemma}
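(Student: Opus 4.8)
\textit{Proof proposal.}

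The plan is to locate, by pigeonhole, a triple of points of $S$ that $S'$ misses, and then to hand-build a witness triangle whose rotation ``avoids'' exactly that triple. First I would sort $T_1,\dots,T_n$ by rotation angle $\theta_1<\dots<\theta_n$ (cyclically). As in the proof of Lemma~\ref{lemma:bijection}, the $3n$ points of $S$ then split as a disjoint union $S=S^{(1)}\sqcup\cdots\sqcup S^{(n)}$, where $S^{(i)}$ is the set of three vertices of $\cap_k T_k$ lying on $\partial T_i\cap\partial T_{i+1}$ (indices cyclic) -- these are the vertices ``pinched'' between the two rotationally-consecutive triangles $T_i$ and $T_{i+1}$, and $\partial T_i\cap S=S^{(i-1)}\cup S^{(i)}$. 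Since $|S'|<n$, some triple $S^{(j)}$ is disjoint from $S'$, so $S'\subseteq S\setminus S^{(j)}$.

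For the witness I would take $T^{\ast}$ to be the equilateral triangle whose rotation angle bisects the angular gap between $T_j$ and $T_{j+1}$, scaled so that it is inscribed in $C_{1-\epsilon}$; then automatically $T^{\ast}\subseteq C_{1-\epsilon}$, and the goal becomes $S\setminus S^{(j)}\subseteq T^{\ast}\subseteq P$. I would first write down the relevant coordinates: a point of $S^{(i)}$ lies at distance $\tfrac12\sec(\delta_i/2)$ from the origin, where $\delta_i$ is the angular gap between $T_i$ and $T_{i+1}$, in one of three directions spaced equally around the origin and determined by the midpoint of that gap, while the three edges of $T^{\ast}$ sit at inward distance $\tfrac12\sqrt{1-\epsilon}$ with outward normals determined by the midpoint of gap $j$. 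With these, the fact that $S^{(j)}$ lies just outside the three edges of $T^{\ast}$ is immediate (this is consistent with $(P,S)$ being a \textbf{NO} instance), and for $i\ne j$ the membership $S^{(i)}\subseteq T^{\ast}$ reduces to the single inequality $\cos\gamma_i\le\sqrt{1-\epsilon}\,\cos(\delta_i/2)$, where $\gamma_i\le\pi/3$ is the circular distance (modulo $2\pi/3$) between the midpoints of gaps $i$ and $j$.

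This inequality holds because a direct computation with the angles $\theta_i$ gives $\gamma_i\ge(\delta_i+\delta_j)/2$, and since $\delta_i+\delta_j\le 2\pi/3$ this yields $\cos\gamma_i\le\cos\big((\delta_i+\delta_j)/2\big)\le\cos(\delta_i/2)\cos(\delta_j/2)$; so it suffices to have $\cos(\delta_j/2)\le\sqrt{1-\epsilon}$, i.e.\ $\epsilon$ small relative to $\min_k\delta_k$. This is a constraint on $\epsilon$ of exactly the same flavour as the one already invoked in Claim~\ref{claim:scale}, so it can be folded into the choice of $\epsilon$ there. Granting it, $S'\subseteq S\setminus S^{(j)}\subseteq T^{\ast}$, and $T^{\ast}\subseteq P$ then exhibits $(P,S')$ as a \textbf{YES} instance.

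The step I expect to be the main obstacle is the containment $T^{\ast}\subseteq P$, because it is sensitive to exactly what $P$ is. If $P$ is the convex hull of the vertices of the triangles $T_i^{(1-\epsilon)}$ rather than the full disk $C_{1-\epsilon}$, then the vertices of $T^{\ast}$ need not lie in $P$, and one should instead take $T^{\ast}$ inscribed in the largest disk contained in $P$ and re-run the trigonometric bookkeeping with this smaller radius in place of $\sqrt{1-\epsilon}$. Making the two demands on $T^{\ast}$ -- large enough to swallow all $n-1$ surviving triples, small enough to sit inside $P$ -- simultaneously satisfiable is the real content of the argument, and is where the parameters (the value of $\epsilon$ and the positions of the $T_i$) must be pinned down; everything else is routine.
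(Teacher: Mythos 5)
Your route is genuinely different from the paper's, and your pigeonhole is actually the sharper one. The paper takes the witness to be one of the shrunken input triangles $T_i^{(1-\epsilon)}$, chosen so that no point of $S'$ lies on the boundary of $T_i$; then Claim~\ref{claim:scale} gives $S'\subseteq T_i^{(1-\epsilon)}$, and $T_i^{(1-\epsilon)}\subseteq P$ holds for free because its vertices are, by definition, vertices of $P$. Note that this witness requires $S'$ to avoid all six boundary points of $T_i$, i.e.\ both triples adjacent to $T_i$ in your decomposition; since each point of $S$ lies on the edges of two triangles (Lemma~\ref{lemma:bijection}), a counting argument only guarantees such a $T_i$ when $|S'|$ is below roughly $n/2$. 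Your pigeonhole ($n$ triples, so $|S'|<n$ misses one triple entirely) is what the stated threshold really calls for, and it correctly forces the witness to be a new triangle interpolating between $T_j$ and $T_{j+1}$. Your bookkeeping for $S\setminus S^{(j)}\subseteq T^{\ast}\subseteq C_{1-\epsilon}$ is also correct: $\gamma_i\ge(\delta_i+\delta_j)/2$, the bound $\cos\bigl((\delta_i+\delta_j)/2\bigr)\le\cos(\delta_i/2)\cos(\delta_j/2)$, and the resulting condition $\epsilon\le\min_k\sin^2(\delta_k/2)$ all check out, and the two non-nearest edge constraints are indeed dominated by the nearest one.

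The genuine gap is the one you flag yourself: $T^{\ast}\subseteq P$. Since $P$ is the convex hull of the vertices of the $T_i^{(1-\epsilon)}$ and not the disk $C_{1-\epsilon}$, the vertices of your $T^{\ast}$ sit on the circle strictly between consecutive vertices of $P$ and hence outside $P$. Your proposed repair, inscribing $T^{\ast}$ in the largest disk contained in $P$, does not survive the arithmetic: that disk has radius $\sqrt{1-\epsilon}\cos(\delta_{\max}/2)$, where $\delta_{\max}$ is the largest angular gap, and rerunning your inequality with this radius requires $\cos(\delta_j/2)\le\sqrt{1-\epsilon}\cos(\delta_{\max}/2)$, which already fails when $\delta_j=\delta_{\max}$ and fails badly when one gap is near $2\pi/3$ and the rest are tiny (the disk then has radius about $\tfrac12\sqrt{1-\epsilon}$, so $T^{\ast}$ has inradius about $\tfrac14$, too small to reach points of $S$ at distance about $\tfrac12$ in the direction of its edge normals). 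So the containment in $P$ must be checked facet by facet against the actual polygon, or the witness must be chosen differently (e.g.\ as a perturbation of $T_j^{(1-\epsilon)}$ or $T_{j+1}^{(1-\epsilon)}$ that stays inside $P$ while shedding only the triple $S^{(j)}$); as written, the proposal does not yet close this step, and it is exactly the step the paper's choice of witness is designed to avoid.
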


\begin{proof}
Using Lemma~\ref{lemma:bijection}, each $s \in S$ intersects exactly two edges of triangles in $\{T_1, T_2, ... T_n\}$, so if $|S'| < n$, there must be a triangle $T_i$ for which $T_i \cap S' = \empty$. 

Consider $T_i^{(1-\epsilon)}$: Using Claim~\ref{claim:scale}, we conclude that $T_i^{(1-\epsilon)} \cap S' = S' - T_i \cap S' = S'$. And we have that $T_i^{(1-\epsilon)} \subset C_{1 - \epsilon}$, so $(P, S')$ is indeed satisfiable. 
\end{proof}

We use the following lemma from Vavasis:

\begin{lemma}\cite{Vav}
Let $rank(M) = r$, and let $M = UV$ where $U$ and $V$ have $r$ columns and rows respectively. Then $M$ has $rank^+(M) = r$ if and only if there is an invertible $r \times r$ matrix $Q$ such that $U Q^{-1}$ and $Q V$ are both nonnegative. 
\end{lemma}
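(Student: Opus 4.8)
The plan is to prove the equivalence directly, leaning on two elementary facts: first, that $rank^+(N) \ge rank(N)$ for every nonnegative matrix $N$; and second, that any two rank-$r$ factorizations of a rank-$r$ matrix are related by an invertible $r \times r$ change of coordinates.

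For the easy direction, given such a $Q$ I would simply set $A = UQ^{-1}$ and $W = QV$; then $AW = UQ^{-1}QV = UV = M$, both factors are nonnegative by hypothesis, and the inner dimension is $r$, so $rank^+(M) \le r$. Combined with the always-true bound $rank^+(M) \ge rank(M) = r$, this pins down $rank^+(M) = r$.

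For the converse, I would start from a nonnegative factorization $M = AW$ of inner-dimension $r$ witnessing $rank^+(M) = r$, with $A$ of size $m \times r$ and $W$ of size $r \times n$. The first step is to observe that $r = rank(M) = rank(AW) \le \min\{rank(A), rank(W)\} \le r$ forces $rank(A) = rank(W) = r$ (and likewise $rank(U) = rank(V) = r$). The key step is then the uniqueness of rank factorizations: the column space of $M$ is $r$-dimensional and is spanned both by the columns of $U$ and by the columns of $A$ (each of these is an $r$-dimensional space containing the columns of $M$), so the full-column-rank matrices $U$ and $A$ have the same column space, whence $A = UN$ for some invertible $r \times r$ matrix $N$. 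Dually, $W$ and $V$ have the same row space, so $W = N'V$ for an invertible $r \times r$ matrix $N'$. Substituting into $UV = M = AW = UNN'V$ and cancelling $U$ on the left (using a left inverse of $U$) and $V$ on the right (using a right inverse of $V$) gives $NN' = I_r$, so $N' = N^{-1}$. Taking $Q = N^{-1}$ then yields $UQ^{-1} = UN = A \ge 0$ and $QV = N'V = W \ge 0$, as required.

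I do not anticipate a real obstacle here: the only points requiring any care are (i) verifying that the inner-dimension-$r$ factorization achieving $rank^+(M) = r$ must have full-rank factors, so that the matrices $N$ and $N'$ produced from the equality of column/row spaces are genuinely invertible, and (ii) the cancellation step, which uses that a full-column-rank matrix has a left inverse and a full-row-rank matrix has a right inverse. Everything else is bookkeeping.
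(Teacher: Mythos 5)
Your argument is correct. Note that the paper itself gives no proof of this lemma; it is quoted from Vavasis \cite{Vav} as a black box, so there is no in-paper argument to compare against. Your proof is the standard one: the forward direction is immediate from $AW = UQ^{-1}QV = M$ together with $rank^+(M)\geq rank(M)$, and the converse rests on the uniqueness of rank factorizations up to an invertible change of basis. All the delicate points are handled properly --- you verify that $rank(A)=rank(W)=rank(U)=rank(V)=r$ before invoking equality of column/row spaces (so $N$ and $N'$ are genuinely invertible), and the cancellation $NN'=I_r$ via a left inverse of $U$ and a right inverse of $V$ is legitimate precisely because of those full-rank facts. This is, in essence, the argument in Vavasis's paper, and nothing is missing.
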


We could use the reduction in \cite{Vav} from nonnegative rank to the intermediate simplex problem, but there is a technical issue that arises. Here, we give a slight modification of this reduction that avoids this issue:

Consider the plane $F = \{(x, y, z) | x + y + z = 1\} $. Map $P$ to this plane so that $P$ is contained in the nonnegative orthant (scale down $P$, if need be), and let the nonnegative hull of vectors in $P$ and the origin be denoted by the cone $\calC$. 

Let $\calC = \{ \vec{v} | A v \geq 0\}$ and set the rows of $U$ to be vertices of $F \cap \calC$ and let $V = A^T$. Note that the vertices of $F \cap \calC$ are just the {\em three-dimensional} coordinates corresponding to the points in $S$. Note that $UV$ is a nonnegative matrix, since each vertex of $F \cap \calC$ is contained in the cone $\calC$. This reduction is essentially the one in \cite{Vav} but with a minor change to avoid a certain technical issue that would arise otherwise. 

\begin{lemma}
There is an invertible $r \times r$ matrix $Q$ such that $U Q^{-1}$ and $Q V$ are both nonnegative if and only if $(P, S)$ is a \textbf{YES} instance.
\end{lemma}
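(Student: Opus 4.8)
The plan is to convert the algebraic condition on $Q$ into the geometric statement about a sandwiched triangle, via the identification $\calC\cap F=P$. The key preliminary observation is that every nonzero vector of $\calC$ has positive coordinate-sum (since $\calC$ is the nonnegative hull of the origin and $P$, and $P$ lies in the nonnegative orthant of the plane $F=\{(x,y,z):x+y+z=1\}$), so it normalizes to a unique point of $\calC\cap F=P$. Write $q_1,q_2,q_3\in\mathbb{R}^3$ for the rows of $Q$; and for $s\in S$ let $u_s$ denote the three-dimensional coordinate vector forming the corresponding row of $U$, so that $u_s\in F$.

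First I would record two ``dictionary'' facts, both immediate on unwinding the matrix products. (i) Since $(QV)_{kj}=\langle q_k, A^j\rangle$ where $A^j$ is the $j$-th row of $A$, and $\calC=\{v:Av\geq \vec{0}\}$, the matrix $QV$ is nonnegative if and only if $q_k\in\calC$ for every $k$. (ii) Since $Q$ is invertible, ``$u_sQ^{-1}$ is nonnegative'' is equivalent to $u_s=\sum_k (u_sQ^{-1})_k\,q_k$ exhibiting $u_s$ as a nonnegative combination of $q_1,q_2,q_3$; hence $UQ^{-1}$ is nonnegative if and only if $u_s\in\mathrm{cone}(q_1,q_2,q_3)$ for every $s\in S$.

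For the forward direction, suppose $Q$ is invertible and $UQ^{-1},QV$ are both nonnegative. By (i) each $q_k$ is a nonzero vector of $\calC$, so normalizing onto $F$ gives $\hat q_k\in\calC\cap F=P$; set $T=\mathrm{conv}(\hat q_1,\hat q_2,\hat q_3)$, a triangle contained in $P$. Since $\mathrm{cone}(q_1,q_2,q_3)\cap F=T$ and $S\subset F$, fact (ii) yields $S\subset T$, so $(P,S)$ is a \textbf{YES} instance. Conversely, let $T$ be a triangle with $S\subset T\subset P$ and let $q_1,q_2,q_3$ be the three-dimensional coordinate vectors of its three vertices. Because $\mathrm{conv}(S)\supset c$ is two-dimensional, $S$ affinely spans $F$, so $T$ is non-degenerate; and since $F$ misses the origin, the vectors $q_1,q_2,q_3$ are linearly independent, whence the matrix $Q$ with these rows is invertible. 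Then $q_k\in P\subset\calC$ makes $QV$ nonnegative by (i), while $S\subset T\subset\mathrm{cone}(q_1,q_2,q_3)$ makes $UQ^{-1}$ nonnegative by (ii).

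The bulk of this argument — facts (i) and (ii) — is routine linear algebra. The one step that needs genuine care is the non-degeneracy claim in the converse: one must invoke $c\subset\mathrm{conv}(S)$ to know that $S$ is truly two-dimensional, and combine this with the fact that $F$ avoids the origin, in order to guarantee that the vertex vectors of the witnessing triangle are linearly independent, so that $Q$ is indeed invertible. Everything else is bookkeeping with the homogeneous coordinates identifying $\calC\cap F$ with $P$.
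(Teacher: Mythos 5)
Your proof is correct and follows essentially the same route as the paper's: identify $P$ with $\calC\cap F$, translate nonnegativity of $QV$ into the rows of $Q$ lying in $\calC$ and nonnegativity of $UQ^{-1}$ into the rows of $U$ lying in $\mathrm{cone}(q_1,q_2,q_3)$, and pass between cones and convex hulls by normalizing onto $F$ (the paper does this via the unit-$\ell_1$-norm observation, which is the same coordinate-sum argument). Your explicit verification that $Q$ is invertible in the \textbf{YES}-to-algebraic direction (using $c\subset\mathrm{conv}(S)$ and that $F$ misses the origin) is a detail the paper leaves implicit, and it is a worthwhile addition.
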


\begin{proof}
Suppose $(P, S)$ is a \textbf{YES} instance. Let the rows of $Q$ be the {\em three-dimensional} coordinates of the vertices of the triangle $T$ (i.e. these are the vectors on the plane $F$). These points are in the cone $\calC$, so $QV$ is nonnegative. Furthermore, $S \subset T$ so each row of $U$ is in the convex hull of rows of $Q$ and $UQ^{-1}$ is nonnegative. 

Conversely, consider an invertible $Q$ for which $U Q^{-1}$ and $Q V$ are both nonnegative. For each row in $Q$, let $p_i$ be the intersection of the ray through the origin and the row in $Q$ with $F$. $p_i \in \calC$, so the associated {\em two-dimensional} point is in $P$. Furthermore, each row of $U$ is in the nonnegative hull of $\{p_1, p_2, p_3\}$ and each $p_i$ and each row in $U$ has nonnegative entries and the sum of the entries is one. Hence each $p_i$ and each row in $U$ has unit $\ell_1$ norm. So each row of $U$ is in the convex hull of $\{p_1, p_2, p_3\}$, and so the associated {\em two-dimensional} triangle contains $S$. 
\end{proof}

Note that in this reduction, rows of $M = UV$ are mapped one-to-one to points in $S$ and columns of $M$ are mapped one-to-one to facets in $P$. Hence, $(U, V)$ is a \textbf{NO} instance, but any set of $< n$ rows of $U$ is a \textbf{YES} instance. 

So $M = UV$ is a nonnegative matrix of dimension $3n \times 3n$ with nonnegative rank $\geq 4$ and yet any submatrix of $< n$ rows has nonnegative rank $\leq 3$. We can use this matrix $M$ to construct a $3rn \times 3rn$ matrix which is block diagonal, and has $M$ along the diagonal. Then:

\begin{maintheorem*}
For any $r \in \mathbb{N}$, there is a $3rn \times 3rn$ nonnegative matrix which has nonnegative rank at least $4r$ and yet for any $< n$ rows, the corresponding submatrix has nonnegative rank at most $3r$. 
\end{maintheorem*}

\noindent An interesting open question is to characterize the family of matrices for which nonnegative rank {\em can} be certified by a small submatrix, since in many applications is is quite natural to assume that the input matrices satisfy these conditions. 

\section*{Acknowledgements}

We would like to thank Jim Renegar for useful discussions. We would also like to thank Sanjeev Arora, Rong Ge, Pavel Hrubes and Avi Wigderson for helpful comments at a preliminary stage of this work. 

\newpage

\end{document}